\newcommand\RR{\mathbb{R}}
\newcommand\ZZ{\mathbb{Z}}
\newcommand\QQ{\mathbb{Q}}
\newcommand\CC{\mathbb{C}}
\newcommand\poly{\mathrm{poly}}
\newtheorem{theorem}{Theorem}
\newtheorem{lemma}[theorem]{Lemma}
\newtheorem{corollary}[theorem]{Corollary}
\newtheorem{example}[theorem]{Example}
\title{Counting Integral Points in Polytopes via Numerical Analysis of Contour Integration}
\author{Hiroshi Hirai, Ryunosuke Oshiro, and Ken'ichiro Tanaka \\
Department of Mathematical Informatics, \\
Graduate School of Information Science and Technology,   \\
The University of Tokyo, Tokyo, 113-8656, Japan.\\
\texttt{\normalsize hirai@mist.i.u-tokyo.ac.jp}\\
\texttt{\normalsize ryunosuke$\_$oshiro@mist.i.u-tokyo.ac.jp} \\
\texttt{\normalsize kenichiro@mist.i.u-tokyo.ac.jp}
}
\begin{document}

\maketitle

\begin{abstract}
    In this paper, we address the problem of counting integer points 
    in a rational polytope described by $P(y) = \{ x \in \RR^m \colon Ax = y, x \geq 0\}$, where $A$ is an $n \times m$ integer matrix and $y$ is an $n$-dimensional integer vector.
    We study the Z-transformation approach initiated by 
    Brion-Vergne, Beck, and Lasserre-Zeron from the numerical analysis point of view, and obtain a new algorithm on this problem: 
    If $A$ is nonnegative, then the number of integer points in $P(y)$ 
    can be computed in $O(\poly (n,m, \|y\|_\infty)  (\|y\|_\infty + 1)^n)$ 
    time and $O(\poly (n,m, \|y\|_\infty))$ space.
    This improves, in terms of space complexity, a naive DP 
    algorithm with $O((\|y\|_\infty + 1)^n)$-size DP table.
    Our result is based on the standard error analysis 
    to the numerical contour integration for the inverse Z-transform, and 
    establish a new type of an inclusion-exclusion formula 
    for integer points in $P(y)$.
    
    We apply our result to hypergraph $b$-matching, 
    and obtain a $O(\poly(\allowbreak n,m,\allowbreak\|b\|_\infty)\allowbreak (\|b\|_\infty +1)^{(1-1/k)n})$
    time algorithm for counting $b$-matchings 
    in a $k$-partite hypergraph with $n$ vertices and $m$ hyperedges.
    This result is viewed as 
    a $b$-matching generalization of the classical result by Ryser for $k=2$ 
    and its multipartite extension by Bj{\"o}rklund-Husfeldt. 
\end{abstract}

Keywords: Integer points in polytopes, counting algorithm, Z-transformation, numerical integration, trapezoidal rule

\section{Introduction}

    Counting integer points in 
    polytopes is a fundamental problem. 
    There are numerous applications in various areas of mathematical science, 
    and fascinating mathematics behind; 
    see e.g., \cite{barvinok_book,beck_book}.
    This problem is computationally intractable, i.e., it is  $\#$P-hard~\cite{valiant1979complexity}.  
    Approximate counting as well as 
    exact counting under fixed parameter settings 
    has been rich sources for developments 
    in the theory of algorithms and computational complexity.
    A seminal work by Barvinok~\cite{barvinok1994} showed that there is a polynomial time algorithm to count integer points in rational polytope $P$ when the dimension $d$ of $P$ 
    is fixed.  
    His algorithm computes 
    a certain ``compact'' expression ({\em Brion-Lawrence formula}) of 
    the generation function 
    $g_P(z) \coloneqq \sum_{a \in P \cap \ZZ^d} z_1^{a_1} z_2^{a_2} \cdots z_d^{a_d}$ 
    for multivariate indeterminate $z = (z_1,z_2,\ldots,z_d)$.
    This (extremely difficult) algorithm is now implemented in computer package {\em LattE}~\cite{DELOERA20041273}, and provides a useful tool 
    to the study of geometric combinatorics. 

    A ``dual'' generating-function approach
    was initiated by Brion-Vergne~\cite{brion1997residue}, Beck~\cite{beck2000counting}, and
    Lasserre-Zeron~\cite{LasserreZeron2003,LasserreZeron2005}; see \cite{LasserreBook}. 
    Suppose now that the input polytope $P = P (y)$ is given by
    \begin{equation}
    P(y) = \{ x \in \RR^m \colon Ax = y, x \geq 0\},
    \end{equation}
     for $n \times m$ integer matrix $A$ and $n$-dimensional vector $y \in \ZZ^n$.
     Let $f_A(y) \coloneqq |P(y) \cap \ZZ^m|$, 
     and consider the {\em Z-transform} $\hat f_A(z) \coloneqq \sum_{y \in \ZZ^n} f_A(y) z_1^{y_1} z_2^{y_2}\cdots z_n^{y_n}$.
     Brion-Vergne~\cite{brion1997residue} showed that $\hat{f}_A$ admits a very simple closed formula $\hat f(z)\!\! =\!\! \prod_{k=1}^{m} 1/(1- z_1^{A_{1k}}z_2^{A_{2k}} \cdots z_n^{A_{nk}})$, and 
     that the wanted $f_A(y)$ is recovered by the {\em inverse Z-transformation}, 
     which is a multi-dimensional contour integration of $\hat{f}_A$.
     This reduces the counting problem to the residue computation of $\hat{f}_A$. 
     By this approach, 
     Lasserre-Zeron~\cite{LasserreZeron2003,LasserreZeron2005} developed an $O((n+1)^{m-n}\Lambda)$-time algorithm to count integral points in $P(y)$, 
     where $\Lambda$ is a function of matrix $A$.
     
     In this paper, we study the contour integration of the inverse Z-transformation 
     from the numerical analysis point of view, and obtain 
     a new algorithm
     to count integer points for 
     an important class of polytopes.
     Our main result is as follows.
\begin{theorem}\label{thm:main}
	Suppose that $A$ is nonnegative.
	For $y \in \ZZ^n$, 
	the number of integer points in the polytope $P(y)$
	can be computed in $O (\poly(n,m,\|y\|_\infty) (\|y\|_\infty +1)^n)$ 
	time and $O (\poly(n,m,\|y\|_\infty))$ space.
\end{theorem}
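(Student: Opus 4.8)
The plan is to make the inverse Z\nobreakdash-transformation of Brion--Vergne completely explicit and then evaluate it by the equispaced (trapezoidal) quadrature rule on the torus: the quadrature error is governed by an exact aliasing identity which, thanks to $A\ge 0$ and a suitable choice of the number of nodes, truncates to a tail that can be made tiny by shrinking the integration radius only slightly.

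\emph{Setup and reductions.} We may assume $A$ has no zero column (a zero column makes $f_A(y)$ equal to $0$ or $\infty$, decided in polynomial time) and $y\ge 0$ (otherwise $f_A(y)=0$ since $A\ge 0$). Then $P(y)$ is bounded, and for any $v\ge 0$, choosing for each $k$ an index $j$ with $A_{jk}\ge 1$ gives $x_k\le A_{jk}x_k\le (Ax)_j=v_j$ for $x\in P(v)$, whence
\[
 f_A(v)\ \le\ (\|v\|_\infty+1)^m .
\]
By Brion--Vergne, $\hat f_A(z)=\prod_{k=1}^m (1-z^{A_{\cdot k}})^{-1}$, which is holomorphic on $\{|z_j|<1\}$ (for $|z_j|=r<1$ one has $|z^{A_{\cdot k}}|=r^{s_k}\le r<1$, where $s_k:=\sum_j A_{jk}\ge 1$), and $f_A(y)$ is its coefficient of $z^y$. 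Hence, by Cauchy's formula and the substitution $z_j=re^{\mathrm i\theta_j}$, for any $r\in(0,1)$,
\[
 f_A(y)=\frac{r^{-\sum_j y_j}}{(2\pi)^n}\int_{[0,2\pi]^n}\hat f_A(re^{\mathrm i\theta})\,e^{-\mathrm i\langle y,\theta\rangle}\,\mathrm d\theta .
\]

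\emph{Quadrature and an exact aliasing identity.} Fix $N:=\|y\|_\infty+1$ and apply the $N$\nobreakdash-point equispaced rule in each coordinate, sampling at $\theta^{(\ell)}=2\pi\ell/N$, $\ell\in\{0,\dots,N-1\}^n$; let $T_N$ be the resulting quadrature value (it is real, so we take its real part). Expanding $\hat f_A(re^{\mathrm i\theta})=\sum_{w\ge 0}f_A(w)r^w e^{\mathrm i\langle w,\theta\rangle}$ and using $\frac1N\sum_{\ell=0}^{N-1}e^{2\pi\mathrm i v\ell/N}=[\,N\mid v\,]$ term by term yields the exact identity
\[
 T_N\ =\ \sum_{q\in\ZZ^n} f_A(Nq+y)\,r^{N\sum_j q_j}.
\]
Since $N>\|y\|_\infty\ge |y_j|$ for all $j$, any $q$ with a negative coordinate forces a negative coordinate of $Nq+y$, so $f_A(Nq+y)=0$; thus the sum is over $q\in\ZZ_{\ge 0}^n$ only, its $q=0$ term is $f_A(y)$, and every term is nonnegative, giving $T_N=f_A(y)+\sum_{q\ge 0,\,q\ne 0}f_A(Nq+y)r^{N\sum_j q_j}$. (This identity, read through the inclusion--exclusion expansion of $\prod_k(1-z^{NA_{\cdot k}})$, is the source of the new inclusion--exclusion formula in the abstract.)

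\emph{Error control, precision, and complexity.} For $q\ne 0$ we have $f_A(Nq+y)\le(\|Nq+y\|_\infty+1)^m\le(2N\|q\|_\infty)^m$ (using $\|y\|_\infty<N$), and $\#\{q\in\ZZ_{\ge 0}^n:\sum_j q_j=t\}\le (t+n)^{n-1}$, so a routine estimate bounds the tail by $r^{N}\cdot 2^{O((n+m)\log(n+m)+m\log\|y\|_\infty)}$ (for $r^N\le 1/2$). Hence one can pick $r\in(0,1)$ with $\log(1/r)=O\big((\,(n+m)\log(n+m)+m\log\|y\|_\infty\,)/(\|y\|_\infty+1)\big)$ making the tail $<1/4$, so $f_A(y)=\mathrm{round}(T_N)$ once $T_N$ is computed with additive error $<1/4$. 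With this $r$ every quantity occurring is an algebraic/transcendental number of $O(\mathrm{poly}(n,m,\log\|y\|_\infty)+\text{input size})$ bits: $r^{-\sum_j y_j}\le r^{-n\|y\|_\infty}$, $|\hat f_A(re^{\mathrm i\theta})|\le(1-r)^{-m}$, and the partial sums are at most $N^n(1-r)^{-m}$; a factor $1-z^{A_{\cdot k}}$ whose modulus correction $r^{s_k}$ falls below the working precision is simply treated as $1$, which handles possibly large entries of $A$. Thus one summand $\hat f_A(re^{\mathrm i\theta^{(\ell)}})e^{-\mathrm i\langle y,\theta^{(\ell)}\rangle}$ is computable in polynomial time, and accumulating the $N^n=(\|y\|_\infty+1)^n$ summands one at a time gives total time $O(\mathrm{poly}(n,m,\|y\|_\infty)(\|y\|_\infty+1)^n)$ and space $O(\mathrm{poly}(n,m,\|y\|_\infty))$.

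\emph{Main obstacle.} The crux is the error analysis: establishing the exact aliasing identity, observing that the ``wrong'' aliases ($q\not\ge 0$) vanish precisely because $N>\|y\|_\infty$ and $A\ge 0$, and then bounding the resulting geometric-type tail tightly enough to exhibit a radius $r$ that is simultaneously small enough (tail $<1/4$) and ``large'' enough (only polynomially many bits) to keep the quadrature numerically cheap. The remaining finite-precision bookkeeping is standard but must be carried out carefully to stay within polynomial space.
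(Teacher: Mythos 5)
Your starting point is exactly the paper's: apply the $N$-point trapezoidal rule (with $N=\|y\|_\infty+1$) to the torus integral for the inverse Z-transform, and observe via aliasing that the quadrature value equals $f_A(y)$ plus a sum over nonzero aliases $q\ge 0$ (the aliases with a negative coordinate vanish because $N>\|y\|_\infty$ and $A\ge 0$); this is precisely Theorem~\ref{thm:estimate}. Where you diverge is in how you recover $f_A(y)$ from the aliasing identity. You shrink the radius so that the alias tail is $<1/4$, run the quadrature in finite-precision arithmetic, and round; the paper instead keeps the radius symbolic (sets $t=r^{-1}$) and replaces the primitive root $\omega_N(1)$ by a formal generator $s$ of the group ring $\QQ[\ZZ/N\ZZ]$, computes the coefficient $\overline{f_A}(s)$ of $t^{\mathbf 1^\top y}$ exactly over $\QQ$, and then extracts $f_A(y)$ via the structural lemma that $\overline{f_A}(s)=f_A(y)+\sum_{i\mid N,\,i<N}K_i\,(1+s^i+\cdots+s^{N-i})$ and a short subtraction loop. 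So the key lemma is shared, but the algorithms are genuinely different: yours is analytic/numerical, the paper's is exact/algebraic. The paper's route buys you exactness for free and completely sidesteps the choice of $r$ and all floating-point precision questions; it is also what powers the later hypergraph-matching specialization, where the group-ring structure factorizes. Your route is conceptually closer to classical quadrature error analysis and is arguably more self-contained, but the finite-precision bookkeeping you label ``standard'' is in fact the delicate part -- one must fix a working precision $p=\poly(n,m,\log\|y\|_\infty,\log\|A\|_\infty)$, argue that each of the $N^n$ summands can be evaluated to absolute error $2^{-p}$ (including the computation of $\omega_N(\ell^\top A_k\bmod N)$, $r^{s_k}$ for possibly huge $s_k$, and the product of $m$ reciprocals each bounded below by $1-r$), and track error accumulation against the prefactor $r^{-\mathbf 1^\top y}$. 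Your sketch indicates the right quantities but does not carry out that bookkeeping; it is worth noting that the authors report in their concluding remarks that a direct floating-point implementation of this integral ``did not work well,'' which is exactly why the paper migrates to the group-ring formulation. For a fully rigorous proof by your route you would need to nail down the precision argument at the level of the algebraic version's Lemma on computing $\overline{f_A}(s)$; as written it is a correct plan with the same leading-order complexity, but with an unfinished (though plausibly repairable) numerical-analysis tail.
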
 
       Notice that there is a simple DP algorithm with the same time complexity:
       For $k=1,2,\ldots,m$, consider the matrix $A^k$ consisting of the first $k$ columns of $A$, and the number 
       $N^k(z)$ of integer points of polyhedron $\{x \in \RR^k \mid A^k x= z, x \geq 0\}$.
       By the nonnegativity of $A$, 
       integers $N^{k+1}(z)$ for $0 \leq z \leq y$ are 
       obtained from $N^{k}(z)$ for $0 \leq z \leq y$ in $O(\poly(n,\|y\|_\infty)(\|y\|_\infty +1)^n))$ time. The resulting DP algorithm, however, requires an 
       $O(\|y\|_\infty +1)^n)$-space for the DP table. 
       Thus our result is regarded as an improvement in terms of space complexity.

       Our technique for proving Theorem~\ref{thm:main} is as follows.
       Instead of the residue computation of $\hat f_A$, 
       we apply the numerical integration to 
       the inverse Z-transform of $\hat f_A$, where we use the {\em trapezoidal rule}, a basic and popular method of numerical integration. 
       By the standard error analysis of the trapezoidal rule~\cite{trefethen2014exponentially}, 
       we obtain 
       an error estimate with respect to 
       the number $N$ of sampled points of the numerical integration 
       and contour radius $r$.  
       Interestingly this estimate gives rise to a new inclusion-exclusion type formula
       for $f_A(y)$, and brings the algorithm in Theorem~\ref{thm:main}, which is quite simple and is easier to be implemented. 
       A notable feature of our inclusion-exclusion is 
       to use the cancellation structure of 
       trigonometric function $\exp (2\pi i k/N)$ in the complex plane $\CC$.
       This extends the usual inclusion-exclusion based on 
       the cancellation of $1$ and $-1$ in $\RR$. 
       Our algorithm computes the number of integer points in  
       the expression $\sum_{k=0}^{N-1} a_k \exp (2\pi i k/N)$ for $a_k \in \QQ$, 
       and recovers the ``true'' value 
       by algebraic computation on the group ring of cyclic group $\ZZ/N\ZZ$, 
       which avoids numerical computation of $\exp (2\pi i k/N)$.

    Our result is applicable 
    to packing-type polytopes, which ubiquitously arise 
    from graph theory and combinatorial mathematics.
    Consider the particular case 
    where $A$ is 0-1 valued and $y$ is the all-one vector ${\bf 1}$.
    Then $f_A({\bf 1}) = |P({\bf 1}) \cap \ZZ^m|$ is the number of
	perfect matchings in the hypergraph corresponding to $A$.
	Exact exponential time counting algorithms for matchings 
	have been intensively studied in recent years~\cite{bjorklund2012counting,bjorklund2008exact,cygan2015faster,izumi2012new}.
	The formula of hypergraph matching derived from our result 
    is viewed as a variant of that given by Bj{\"o}rklund-Husfeldt~\cite{bjorklund2008exact}. 
	The time complexity $O^*(2^n)$ matches that of their algorithm.
	By exploiting special properties, 
    Bj{\"o}rklund-Husfeldt~\cite{bjorklund2008exact} 
	improve the time complexity to $O^* (2^{(1-1/k)n})$ for $k$-partite hypergraphs.
	This result is viewed as a multipartite extension of 
	the classical result of Ryser~\cite{ryser1963} for bipartite matching.
	We also exploit a special structure of our formula 
	in $k$-partite hypergraphs, and prove
    the following $b$-matching generalization of the results of Ryser and Bj{\"o}rklund-Husfeldt.
     \begin{theorem}\label{thm:main2}
     	Let ${\cal H} = (V,{\cal E})$ be a $k$-partite hypergraph, 
     	and let $b: V \to \ZZ_+$. 
     	The number of perfect $b$-matchings in ${\cal H}$
     	can be computed in $O(\poly (|V|,|{\cal E}|, \allowbreak\|b\|_\infty) (\|b\|_\infty+1)^{(1-1/k)|V|})$ time and $O(\poly (|V|,|{\cal E}|, \|b\|_\infty)$ space.
     \end{theorem}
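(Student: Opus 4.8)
Write $V=V_1\sqcup\cdots\sqcup V_k$, $n=|V|$, $m=|\mathcal E|$, and let $A\in\{0,1\}^{V\times\mathcal E}$ be the vertex--hyperedge incidence matrix, which is nonnegative. A perfect $b$-matching is exactly a vector $x\in\ZZ^{\mathcal E}$ with $Ax=b$, $x\ge 0$, so the number sought is $f_A(b)=|P(b)\cap\ZZ^{\mathcal E}|$, and Theorem~\ref{thm:main} already gives it in $O(\poly(n,m,\|b\|_\infty)(\|b\|_\infty+1)^n)$ time and polynomial space. The goal is only to shave the exponent from $n$ to $(1-1/k)n$. Relabelling parts, I will assume $|V_k|=\max_i|V_i|\ge n/k$, so $|V\setminus V_k|\le(1-1/k)n$; and since every hyperedge meets each part at most once, if the partial sums $\sum_{v\in V_i}b(v)$ are not all equal then there are no perfect $b$-matchings and the answer is $0$, so I assume equality.

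The idea is to evaluate the $|V_k|$ innermost one-dimensional contour integrals of the inverse Z-transform in closed form, and to apply the numerical-integration machinery of Theorem~\ref{thm:main} only to the remaining $|V\setminus V_k|$-dimensional integral. Put $z'=(z_v)_{v\in V\setminus V_k}$ and $c_e(z')=\prod_{v\in e\setminus V_k}z_v$. Grouping hyperedges by their (unique, if any) endpoint in $V_k$, the Brion--Vergne formula reads
\[
  \hat f_A(z)=\Bigl(\prod_{e\cap V_k=\emptyset}\frac{1}{1-c_e(z')}\Bigr)\prod_{u\in V_k}\ \prod_{e\cap V_k=\{u\}}\frac{1}{1-c_e(z')\,z_u}.
\]
Since $\prod_e(1-c_et)^{-1}=\sum_{d\ge 0}h_d\bigl((c_e)_e\bigr)t^d$ with $h_d$ the degree-$d$ complete homogeneous symmetric polynomial, the residue at $z_u=0$ of the $z_u$-integrand equals $h_{b(u)}$ of the relevant $c_e$'s, and (by Fubini on a suitable polytorus) the inverse Z-transformation collapses to the exact identity
\[
  f_A(b)=\frac{1}{(2\pi i)^{|V\setminus V_k|}}\oint\!\!\cdots\!\!\oint g(z')\prod_{v\in V\setminus V_k}z_v^{-b(v)-1}\,dz_v,
  \qquad
  g(z'):=\Bigl(\prod_{e\cap V_k=\emptyset}\frac{1}{1-c_e(z')}\Bigr)\prod_{u\in V_k}h_{b(u)}\bigl((c_e)_{e\cap V_k=\{u\}}\bigr).
\]
When $\mathcal H$ is properly $k$-partite (each hyperedge has exactly one vertex per part) the first product is empty and $g$ is a polynomial in $z'$; otherwise $g$ is rational with the same pole pattern as $\hat f_A$.

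Next I would run the method behind Theorem~\ref{thm:main} on this reduced integral: approximate it by the trapezoidal rule on the product grid $z_v=r_v\exp(2\pi i\ell_v/N_v)$, $v\in V\setminus V_k$, using the radii $r_v$ and sample counts $N_v=O(\|b\|_\infty)$ from that proof. By the trapezoidal error estimate of \cite{trefethen2014exponentially} together with the inclusion--exclusion identity established there, the resulting weighted sum over the $O((\|b\|_\infty+1)^{(1-1/k)n})$ grid points equals $f_A(b)$ exactly, and, as there, the arithmetic is performed in the group ring of $\ZZ/N\ZZ$ so that no root of unity is ever evaluated numerically. The only nontrivial per-point cost is evaluating $g$: each $h_{b(u)}\bigl((c_e)_{e\cap V_k=\{u\}}\bigr)$ is the degree-$b(u)$ coefficient of a product of at most $m$ linear factors and is computed by the elementary recurrence for $h_d$ in $O(m\|b\|_\infty)$ ring operations (and likewise for the factors $(1-c_e)^{-1}$ truncated to degree $\|b\|_\infty$), so $O(\poly(n,m,\|b\|_\infty))$ per point. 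This yields the time bound, and the space remains polynomial because the grid is streamed and no table indexed by all of $V$ is ever stored.

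The crux will be the penultimate step: certifying that the reduced integrand $g$ satisfies the hypotheses of the error/inclusion--exclusion analysis behind Theorem~\ref{thm:main}. This should go through because $g$ has essentially the same analytic profile as $\hat f_A$: in the properly $k$-partite case $g$ is a polynomial whose coefficients are nonnegative integers (counting partial matchings) and whose degree in each $z_v$ is at most $\sum_{u\in V_k}b(u)$, while in general its only poles are the $\prod(1-c_e)$, exactly as for $\hat f_A$. Re-running the trapezoidal estimate with the Theorem~\ref{thm:main} parameters, now over $V\setminus V_k$ only, to confirm exact recovery is the real work; granting it, the complexity accounting above completes the proof. As a sanity check, for $k=2$, $b=\mathbf 1$ one gets $h_1\bigl((c_e)_{e\ni u}\bigr)=\sum_{v\sim u}z_v$, hence $g(z')=\prod_{u\in V_2}\sum_{v\sim u}z_v$, and extracting the coefficient of $\prod_{v\in V_1}z_v$ by the $\pm1$ inclusion--exclusion over subsets of $V_1$ is precisely Ryser's permanent formula; Theorem~\ref{thm:main2} is its $(\|b\|_\infty+1)$-ary, $k$-partite generalization, and specializes to Bj\"orklund--Husfeldt when $b=\mathbf 1$.
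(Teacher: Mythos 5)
Your approach is genuinely different from the paper's. The paper never performs any analytic residue computation: working entirely inside the group ring $\QQ[\ZZ/N\ZZ][t]$, it proves a more general statement --- given any \emph{stable set} $S$ of ${\cal H}(A)$, one can compute $f_A(y)$ in $O(\poly(n,m,\|y\|_\infty)(\|y\|_\infty+1)^{n-|S|})$ time --- by rewriting the discrete sum $\sum_j\sum_h t^{\bm 1^\top Ah}s^{j^\top(y-Ah)}$ as $\sum_{j'}F_0(j')\prod_{\alpha\in S}F_\alpha(j')$. The crucial combinatorial fact is exactly the one you use, that each column of $A$ touches $S$ in at most one coordinate, so the inner sums over $j_\alpha\in\{0,\dots,N-1\}$ decouple and each $F_\alpha(j')$ is computable in $\poly(n,m,N)$ time per outer index $j'$. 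The $k$-partite theorem then follows by taking $S$ to be the largest part. Your route instead evaluates the $|V_k|$ innermost contour integrals analytically: the residue at $z_u$ of the $z_u$-factor is a complete homogeneous symmetric polynomial $h_{b(u)}$, which collapses the inverse Z-transform to a $(1-1/k)n$-dimensional integral of a new integrand $g$, to which you then re-apply the discretization machinery. Both arguments exploit the same structural fact; what the paper's algebraic factorization buys is that the approximation analysis of Theorem~\ref{thm:estimate} never has to be revisited, whereas your residue step changes the integrand and forces you to re-certify the aliasing error estimate for $g$.

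That re-certification is precisely the step you flag as "the real work" and do not carry out, and it is a genuine gap rather than a routine check. Theorem~\ref{thm:estimate} and Corollary~\ref{corollary:integerpointsinpolytopes} are stated and proved specifically for $\hat f_A$ with $A$ nonnegative; your $g$ is a product of complete homogeneous symmetric polynomials $h_{b(u)}$ times the remaining $\hat f_{A^0}$-type factor, which is not literally $\hat f_{A'}$ for any matrix $A'$. To complete your argument you would need to show that $g$ admits an expansion $g(z')=\sum_{w\in\ZZ_{\ge 0}^{V\setminus V_k}}g_w\,(z')^{-w}$ with $g_w\ge 0$ and $g_{b'}=f_A(b)$ (where $b'=b|_{V\setminus V_k}$), and then rerun the aliasing argument of Theorem~\ref{thm:estimate} to conclude that the trapezoidal sum over the $(1-1/k)n$-dimensional grid recovers $g_{b'}$ exactly as the coefficient of the appropriate power of $r$. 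This appears to be true --- all the coefficients involved are nonnegative counts of partial matchings --- but it must actually be proved, and it is not a black-box consequence of the statement of Theorem~\ref{thm:main}. A secondary, cosmetic issue: since $\hat f_A(z)=\prod_e 1/(1-\prod_{v\in e}z_v^{-1})$, the monomials $c_e$ should carry negative exponents (or you should substitute $z\mapsto z^{-1}$ throughout); as written $c_e(z')=\prod_{v\in e\setminus V_k}z_v$ has the wrong sign convention, though this does not affect the substance of the argument. Your Ryser sanity check for $k=2$, $b=\bm 1$ is a nice confirmation that the residue step does produce the right object.
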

   Counting perfect $b$-matchings of a $k$-partite graph
   (i.e., $k$-dimensional $b$-matchings) has many
   applications in a wide range of mathematical sciences that include
   combinatorics,
   representation theory, and statistics; see e.g.,~\cite{DiaconisGangolli1993}.
   For example, counting multiway contingency tables with prescribed margins,
   an important problem for statistical analysis on contingency tables,
   is nothing but $k$-dimensional $b$-matching counting.

   The rest of this paper is organized as follows.
   In Section~\ref{sec:pre}, we set up basic notation, and 
   introduce Z-transformation, its inverse, and approximate inverse Z-transformation  
   obtained by numerical integration.
   In Section~\ref{sec:algo}, we present our algorithm to prove the main theorem.
   In Section~\ref{sec:match}, 
   we discuss hypergraph matching and prove Theorem~\ref{thm:main2} 
   in a further generalized form.

\section{Preliminaries}\label{sec:pre}
\label{preliminaries}
\subsection{Notation}
Let $\mathbb{R}_{>0}$ denote the set of positive real numbers.
Let $\mathbb{Z}_{\geq 0}$ denote the set of nonnegative integers.
For a matrix $A\in \mathbb{Z}^{n\times m}$, let $A_k$ denote the $k$-th column vector of the matrix.
For an integer vector $y=\left( y_1,\dots,y_n \right)^{\top}\in \mathbb{Z}^n$ and a complex vector 
$z = \left( z_1 ,\dots,z_n\right)^{\top}\in \mathbb{C}^n$,
define $z^y \in \mathbb{C}$ by
\[
    z^y \coloneqq z_1^{y_1}\cdots z_n^{y_n}.
\]
For a positive integer $N > 0$, define $\omega_N: \{0,1,2,\ldots,N-1\} \to \CC$ by
\begin{align}
\omega_N(h) \coloneqq \exp \frac{2\pi i}{N} h \quad (h \in \{0,1,2,\ldots,N-1\}).
\end{align}
The following relation is well-known:
\begin{align}
\sum_{j=0}^{N-1} \omega_{N}(kj) = \left\{
\begin{array}{ll}
N & {\rm if}\ k= 0 \bmod N, \\
0 & {\rm otherwise},
\end{array}\right. \quad (k \in \ZZ).\label{eqn:omega}
\end{align}
For a function $f: D \to \CC$ and $z \in D^n$, 
let $f(z)$ denote $(f(z_1),f(z_2),\ldots,f(z_n))^{\top}$, 
such as 
\begin{align}
\exp (z) &= (\exp (z_1),\exp (z_2),\ldots,\exp (z_n))^{\top}, \\
\mathrm{ln}\, z &= \left( \mathrm{ln}\, z_1, \dots , \mathrm{ln}\, z_n\right)^{\top},\\
\omega_N(z) & = (\omega_N(z_1),\ldots, \omega_{N}(z_n))^{\top}. 
\end{align}

The symbol $\bot$ is meant as ``undefined.''
We use $\bot$ 
when the function value is defined via integration or 
infinite summation, possibly not converging.

\subsection{Z-transformation}
For a function $f\colon \mathbb{Z}^n \to \mathbb{R}$, 
define the {\em Z-transform} $\hat{f}\colon \mathbb{C}^n\to \mathbb{C}\cup \left\{\bot\right\}$ of $f$
by
\begin{align}
\hat{f}(z) \coloneqq \sum_{y\in\mathbb{Z}^n} f( y ) z^{-y} \quad (z \in \mathbb{C}^n).\label{eqn:hatf}
\end{align}
The inverse of the Z-transformation is given as follows.
For a function $g\colon \mathbb{C}^n \to \mathbb{C} \cup \{\bot \}$ and $r > 0$,
define $I_r[g] \colon \mathbb{Z}^n \to \mathbb{R}\cup \left\{ \bot \right\}$ by
\begin{align}
I_r[g] ( y ) &\coloneqq \frac{1}{\left(2\pi i\right)^n} \oint_{\lvert z_1 \rvert = r} \dots \oint_{\lvert z_n \rvert = r} g\left(z\right) 
z^{y- \bm{1}} \mathrm{d} z_1 \cdots \mathrm{d} z_n \\ 
&= \int_{[0,1)} \dots \int_{[0,1)} g\left(r\exp\left(2\pi i t\right)\right) 
r^{{\bf 1}^{\top}y}\exp\left( 2\pi i t^{\top} y\right) \mathrm{d} t_1 \cdots \mathrm{d}t_n
\quad (y \in \ZZ^n),\label{eqn:integration2}
\end{align}
where we change variables by $z_k = r \exp (2\pi i t_k)$ in \eqref{eqn:integration2}.
Under an appropriate condition on $f$ and $r$, 
map $g \mapsto I_r[g]$ is actually the inverse of the Z-transformation:
\begin{equation}\label{eqn:I_r[hatf]=f}
I_r[\hat f] = f. 
\end{equation}
We do not go into details under which conditions \eqref{eqn:I_r[hatf]=f} holds. 
Instead, we consider an approximate inverse Z-transform 
by the numerical integration applied to (\ref{eqn:integration2}).
Here we use the {\em trapezoidal rule}, 
which is a basic and popular method of numerical integration; see e.g.,~\cite{QuadratureBook}.
For a positive integer $N > 0$ (the number of points 
in the numerical integration), 
define $I_{N,r} [ g ]: \ZZ^n \to \CC \cup \{\bot\}$ by
\begin{equation}
    I_{N,r}[ g ]  ( y ) \coloneqq \frac{1}{N^n} \sum_{j\in \{ 0,1,\ldots, N-1\}^n }
    g(r\omega_N(j)) r^{{\bf 1}^{\top} y} 
    \omega_N(j^{\top} y) \quad (y \in \ZZ^n). \label{eqn:I_N,r}
\end{equation}
Recall notation $\omega_N(j)\! \coloneqq\! \exp (2\pi i j/N) 
\! = \! (\exp (2\pi j_1/N), \exp (2\pi j_2/N),\allowbreak\ldots,\allowbreak \exp (\allowbreak 2\pi j_n/N)\allowbreak )$.
Our counting algorithm is based on $I_{N,r}$.

\section{Counting integral points in a polytope}\label{sec:algo}
Let $A$ be an $n\times m$ integral matrix. We assume that 
there is no nonzero nonnegative vector $x \in \ZZ^{m}_{\geq 0} \setminus \{0\}$
with $Ax = 0$.
This assumption ensures that the polytope $\{ x \in \RR^m \colon Ax = y, x \geq 0\}$
is bounded for every $y \in \ZZ^n$.
In the case where $A$ is a nonnegative matrix, 
this assumption is equivalent to the property that each column of $A$ has at least one nonzero entry. 

As mentioned in the introduction, 
define function $f_A: \ZZ^n \to \ZZ$ by
\[
    f_A\left(y\right) \coloneqq |\left\{ x \in \mathbb{Z}^m \colon Ax = y, x\geq 0 \right\}|
    \quad (y \in \ZZ^n).
\]
Our starting point is the following formula for $\hat f_{A}$.
\begin{theorem}[\cite{beck2000counting,brion1997residue,LasserreZeron2003}]
    \label{theorem:lz}
    \begin{enumerate}
        \item For $z \in \CC^n$ with  $A^{\top} \mathrm{ln}\, |z| > 0$,  
            the Z-transform $\hat{f}_A(z)$ is given by
            \begin{equation}
                \hat{f}_A\left(z\right) = \sum_{h \in \ZZ_{\geq 0}^m} z^{-Ah} = \prod_{k=1}^m \frac{1}{1-z^{-A_{k}}},
            \end{equation}
            where the series absolutely converges.
        \item
            For $s \in \RR^n_{> 0}$ with $A^\top  \mathrm{ln}\, s > 0$, 
            it holds
            \begin{equation}
                f_A\left( y \right) = \frac{1}{\left( 2 \pi i \right)^n}\oint_{\lvert z_1\rvert = s_1} \cdots \oint_{\lvert z_n\rvert = s_n}
                \hat{f}_A \left( z \right) z^{y-\bm{1}}\,\mathrm{d}z_1\cdots\mathrm{d}z_n.
            \end{equation}
   \end{enumerate}
\end{theorem}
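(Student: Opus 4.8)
The plan is to establish the two assertions in sequence, with Part~1 furnishing the series that Part~2 integrates term by term.

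For Part~1, I would begin from the definition $\hat f_A(z)=\sum_{y\in\ZZ^n}f_A(y)z^{-y}$ and expand $f_A(y)=|\{x\in\ZZ_{\geq 0}^m\colon Ax=y\}|$ as a sum of $1$'s over feasible $x$. Exchanging the order of summation formally gives $\hat f_A(z)=\sum_{x\in\ZZ_{\geq 0}^m}z^{-Ax}$, since every $x\in\ZZ_{\geq 0}^m$ contributes to the single index $y=Ax$. To justify this rearrangement --- and the subsequent factorization --- I would check absolute convergence:
\[
\sum_{x\in\ZZ_{\geq 0}^m}\bigl|z^{-Ax}\bigr|=\prod_{k=1}^m\sum_{t=0}^{\infty}\bigl|z^{-A_k}\bigr|^{t},
\]
which is finite exactly when $|z^{-A_k}|<1$ for every $k$, i.e.\ when $(A^\top \ln|z|)_k=\sum_i A_{ik}\ln|z_i|>0$ for every $k$, precisely the stated hypothesis $A^\top \ln|z|>0$. (This hypothesis also rules out a nonzero $x\geq 0$ with $Ax=0$ --- such an $x$ would force $0=(\ln|z|)^\top Ax=(A^\top\ln|z|)^\top x>0$ --- so in this region every $f_A(y)$ is finite and $\hat f_A$ is genuinely defined.) Absolute convergence then legitimizes both the reindexing and the factorization $\sum_{x}z^{-Ax}=\prod_{k=1}^m\sum_{t\geq 0}(z^{-A_k})^{t}=\prod_{k=1}^m(1-z^{-A_k})^{-1}$ by Fubini--Tonelli.

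For Part~2, I would perform the change of variables $z_k=s_k\exp(2\pi i t_k)$, exactly as in the derivation of \eqref{eqn:integration2} but with a separate radius $s_k$ per coordinate, turning the contour integral into $\int_{[0,1)^n}\hat f_A(s\exp(2\pi i t))\,s^{y}\exp(2\pi i t^\top y)\,\mathrm{d}t$. On this contour $|z|=s$ and $A^\top \ln s>0$, so Part~1 applies; moreover the series for $\hat f_A$ converges uniformly on the compact torus (it is dominated there by the convergent constant series $\sum_x|s^{-Ax}|$), so I may interchange summation and integration. This gives
\[
\sum_{x\in\ZZ_{\geq 0}^m}s^{y-Ax}\prod_{k=1}^n\int_0^1\exp\!\bigl(2\pi i t_k(y_k-(Ax)_k)\bigr)\,\mathrm{d}t_k .
\]
By orthogonality of characters, $\int_0^1\exp(2\pi i t_k\ell)\,\mathrm{d}t_k$ is $1$ when $\ell=0$ and $0$ otherwise, so only the terms with $Ax=y$ survive, each contributing $s^{y-Ax}=s^{0}=1$, and the sum collapses to $|\{x\in\ZZ_{\geq 0}^m\colon Ax=y\}|=f_A(y)$.

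The one place that requires care is the convergence bookkeeping: verifying that the hypotheses $A^\top \ln|z|>0$ and $A^\top \ln s>0$ are exactly the conditions needed for absolute (resp.\ uniform-on-the-torus) convergence, and justifying the two interchanges of summation with a product and with an integral. The remaining ingredients are the change of variables already used for \eqref{eqn:integration2} and the elementary fact that $\int_0^1\exp(2\pi i t\ell)\,\mathrm{d}t$ equals $1$ for $\ell=0$ and $0$ otherwise; since the statement is classical (Brion--Vergne, Beck, Lasserre--Zeron), I anticipate no conceptual obstacle beyond making these convergence arguments precise.
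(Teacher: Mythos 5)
Your proof is correct. The paper states Theorem~\ref{theorem:lz} as a cited result from Brion--Vergne, Beck, and Lasserre--Zeron and does not itself include a proof, so there is nothing internal to compare against; but your argument is exactly the standard one those references use. Part~1: exchange the $y$- and $x$-sums, reduce to a product of geometric series, and observe that $|z^{-A_k}|=\exp(-(A^\top\ln|z|)_k)<1$ is precisely the hypothesis, which simultaneously yields absolute convergence and justifies the rearrangement and factorization. Part~2: parametrize the polytorus by $z_k=s_k\exp(2\pi i t_k)$, dominate the series uniformly on the torus by $\sum_x s^{-Ax}$, swap sum and integral, and invoke $\int_0^1\exp(2\pi i t\ell)\,\mathrm{d}t=[\ell=0]$ to pick out the solutions of $Ax=y$. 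The side remark that $A^\top\ln|z|>0$ is incompatible with a nonzero $x\geq 0$ satisfying $Ax=0$ is a correct consistency check with the paper's standing assumption and ensures $f_A(y)$ is finite on the region of convergence. One very minor notational point: you reuse the letter $t$ both for the geometric-series index in Part~1 and for the integration variable in Part~2; harmless here, but worth distinguishing in a final write-up.
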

We establish 
an approximate version of the above theorem as follows. 
\begin{theorem}
    \label{thm:estimate}
    Suppose that $A$ is nonnegative.
    For $y \in \ZZ_{\geq 0}^n$, $r > 1$, and $N \coloneqq \| y \|_{\infty}+1$, it holds
    \begin{align}
        I_{N,r} [ \hat{f}_A]\left(y\right) 
        &= \frac{1}{N^{n}} \sum_{h \in \ZZ_{\geq 0}^m}
        \sum_{j\in \left\{ 0,1,\ldots,N-1\right\}^n}
        r^{{\bf 1}^{\top}(y - A h)}\omega_N ( j^{\top}(y - A h))  \label{eqn:estimate1}
        \\
        &= f_A\left( y\right) +  
        \sum_{k=1}^\infty r^{-Nk} 
        |\{ x \in \ZZ^m_{\geq 0} \colon y - Ax \geq 0, {\bf 1}^{\top} (y - Ax) = Nk \}|. \label{eqn:estimate2}
    \end{align}
\end{theorem}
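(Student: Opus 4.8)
The plan is to insert the product formula for $\hat f_A$ from Theorem~\ref{theorem:lz} into the finite sum \eqref{eqn:I_N,r} defining $I_{N,r}[\hat f_A](y)$, obtain \eqref{eqn:estimate1} after an interchange of summation, and then collapse the sum over $j$ using the orthogonality relation \eqref{eqn:omega} to reach \eqref{eqn:estimate2}. Since the trapezoidal sum is a genuine finite sum of $N^n$ terms, the only analytic ingredient is the absolute convergence of the geometric series for $\hat f_A$ at the sample points, and the only genuinely new observation is how the hypothesis $N=\|y\|_\infty+1$ constrains the resulting index set.

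First I would check that Theorem~\ref{theorem:lz}(1) applies at every sample point $z=r\,\omega_N(j)$: there $|z_i|=r>1$, so $A^\top\ln|z|=(\ln r)\,A^\top{\bf 1}$, which is positive in each coordinate because the standing assumption forces every column of the nonnegative matrix $A$ to be nonzero. Hence $\hat f_A(r\,\omega_N(j))=\sum_{h\in\ZZ_{\geq 0}^m}(r\,\omega_N(j))^{-Ah}$ with absolute convergence, the modulus of the general term being $r^{-{\bf 1}^\top Ah}$. Substituting this into \eqref{eqn:I_N,r}, writing $(r\,\omega_N(j))^{-Ah}=r^{-{\bf 1}^\top Ah}\,\omega_N(-j^\top Ah)$, and exchanging the finite sum over $j$ with the absolutely convergent sum over $h$, the $r$-powers combine into $r^{{\bf 1}^\top(y-Ah)}$ and the $\omega_N$-factors into $\omega_N(j^\top(y-Ah))$; this is exactly \eqref{eqn:estimate1}.

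To go from \eqref{eqn:estimate1} to \eqref{eqn:estimate2} I would interchange the two sums again — again justified by absolute convergence, since $\sum_h r^{{\bf 1}^\top(y-Ah)}=r^{{\bf 1}^\top y}\sum_h r^{-{\bf 1}^\top Ah}<\infty$ — and evaluate the inner sum over $j$ coordinatewise by \eqref{eqn:omega}: $\frac{1}{N^n}\sum_{j\in\{0,\dots,N-1\}^n}\omega_N(j^\top(y-Ah))$ equals $1$ when $(Ah-y)_i\equiv0\pmod N$ for every $i$ and equals $0$ otherwise. Thus $I_{N,r}[\hat f_A](y)=\sum_h r^{{\bf 1}^\top(y-Ah)}$, summed over $h\in\ZZ_{\geq 0}^m$ with $Ah\equiv y\pmod N$ in every coordinate.

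The crux is to unwind this index set using $N=\|y\|_\infty+1$. For such an $h$ and each $i$ we have $(Ah-y)_i\equiv0\pmod N$ together with $(Ah-y)_i\geq-y_i\geq-(N-1)>-N$, which forces $(Ah-y)_i\in N\,\ZZ_{\geq 0}$; hence $Ah\geq y$, the integer $k\coloneqq\frac1N{\bf 1}^\top(Ah-y)$ is nonnegative, and $r^{{\bf 1}^\top(y-Ah)}=r^{-Nk}$. Grouping the sum by the value of $k$ then yields \eqref{eqn:estimate2}: the $k=0$ term counts the $h$ with $Ah=y$, i.e.\ equals $f_A(y)$, and for $k\geq1$ the coefficient of $r^{-Nk}$ is the number of $x\in\ZZ_{\geq 0}^m$ with $Ax-y\in N\,\ZZ_{\geq 0}^n$ and ${\bf 1}^\top(Ax-y)=Nk$. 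No single step is hard; the points to watch are the legitimacy of the two summation interchanges — the more delicate one being the exchange preceding \eqref{eqn:estimate2}, which needs the absolute convergence noted above — and the sign bookkeeping in this final regrouping, so I do not anticipate a serious obstacle.
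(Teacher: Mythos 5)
Your proof is correct and follows essentially the same route as the paper: substitute the geometric series for $\hat f_A$ at the sample points, interchange the finite $j$-sum with the absolutely convergent $h$-sum to reach \eqref{eqn:estimate1}, collapse the inner $j$-sum coordinatewise via \eqref{eqn:omega}, and invoke $N>\|y\|_\infty$ (together with $A\geq0$, $y\geq0$) to rule out negative multiples of $N$. One point you should have flagged explicitly: the index set you correctly derive, $\{x\in\ZZ^m_{\geq 0}\colon Ax-y\in N\ZZ^n_{\geq 0},\ {\bf 1}^{\top}(Ax-y)=Nk\}$, does not literally agree with the printed \eqref{eqn:estimate2}, which writes $y-Ax\geq 0$ and ${\bf 1}^{\top}(y-Ax)=Nk$. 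The printed version is internally inconsistent with its own coefficient $r^{-Nk}$ (those $x$ would contribute $r^{{\bf 1}^{\top}(y-Ax)}=r^{+Nk}$) and with the surviving condition $Ah-y\in N\ZZ^n_{\geq 0}$ appearing in both your argument and the paper's, so it is evidently a sign typo in the paper and your corrected version is the right one --- but you should say so rather than asserting that the computation ``yields \eqref{eqn:estimate2}'' verbatim. (A tiny phrasing slip: passing from \eqref{eqn:estimate1} to \eqref{eqn:estimate2} requires no second interchange, since the $j$-sum is already the inner one in \eqref{eqn:estimate1}; this does not affect the substance.)
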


\begin{proof}
    From the assumption that $A$ is nonnegative and each column of $A$ has nonzero entry, 
    for $z \in \CC^n$ with $|z_1| = |z_2| = \cdots = |z_n| = r > 1$, it holds $A^{\top} \ln |z| = A^{\top} \ln r > 0$.
    By the previous theorem, 
    the Z-transform $\hat{f}_A(z)$ is given by 
    \begin{align}
    \hat{f}_A(z) = \sum_{h \in \ZZ^m_{\geq 0}} z^{- Ah}.
    \end{align}
    Substituting this expression to $I_{N,r}[\hat f_A](y)$ (in \eqref{eqn:I_N,r}), we have
    \begin{align}
    I_{N,r}[ \hat{f}_A ]\left( y \right)
    &= \frac{1}{N^{n}}
    \sum_{j \in \left\{ 0,1,\ldots,N-1\right.\}^n} \sum_{h \in \ZZ^m_{\geq 0}}
    r^{{\bf 1}^\top(y -Ah)} \omega_N(j^{\top}(y -Ah)) \\
    &= \frac{1}{N^{n}}
    \sum_{h \in \ZZ^m_{\geq 0}} r^{{\bf 1}^\top(y -Ah)} \sum_{j \in \left\{ 0,1,\ldots,N-1\right\}^n} 
     \omega_N(j^{\top}(y -Ah)) \\
     &= \frac{1}{N^{n}}
     \sum_{h \in \ZZ^n_{\geq 0}} r^{{\bf 1}^\top(y -Ah)} \prod_{l = 1}^m 
     \sum_{j_{l}=0}^{N-1} 
     \omega_N(j_{l}(y_l -(Ah)_l)), 
    \end{align}
    where the summations are interchangeable, 
    thanks to the absolute convergence (by $r > 1$).
    By the relation \eqref{eqn:omega},
    $\prod_{l=1}^m \sum_{j_{l}=0}^{N-1}\omega_N(j_{l}(y_l -(Ah)_l))$ is $N^n$ if 
    $Ah - y \in N \ZZ_{\geq 0}^n$, and zero otherwise. 
    Notice that $(A h - y)_l \leq - N$ cannot occur since $\|y\|_{\infty} < N$. 
    Thus we have 
    \[
        I_{N,r} [ \hat{f}_A]\left(y\right)  =  \sum_{u \in \ZZ^n_{\geq 0}} r^{-N{\bf 1}^{\top} u}  
    |\{ x \in \ZZ^m_{\geq 0} \colon y - Ax = Nu\}|.
    \]
    Gathering $u \in \ZZ_{\geq 0}^n$ with ${\bf 1}^{\top} u =k$, we obtain \eqref{eqn:estimate2}.    
\end{proof}

This proof is inspired by 
the standard argument 
to derive the exponential convergence
of the trapezoidal rule applied to periodic functions, where
the the cancellation technique using (\ref{eqn:omega}) in the proof 
is known as {\em aliasing} in numerical analysis;
see \cite[Theorem 2.1]{trefethen2014exponentially}.

Notice that $g(k) \coloneqq |\{ x \in \ZZ^m_{\geq 0} \colon y - Ax \geq 0, {\bf 1}^{\top} (y - Ax) = k \}|$ is a variant of the Ehrhart (quasi)polynomial, 
and is bounded by a polynomial in $k$ with degree $m - n$. 
Hence the series in~\eqref{eqn:estimate2} actually absolutely converges for $r > 1$.

\begin{corollary}\label{corollary:integerpointsinpolytopes}
    Suppose that $A$ is nonnegative.
    Let $y \in \ZZ^n_{\geq 0}$ and $N \coloneqq \lVert y \rVert_{\infty} + 1$. 
    Then $f_A\left( y \right)$ is equal to the coefficient of $r^{- \bm{1}^{\top} y }$ in
    \begin{align}
    	& \frac{1}{N^n}
    	\sum_{j\in \left\{ 0,1,\ldots,N-1 \right\}^n} \sum_{h \in \{0,1,2,\ldots,N-1\}^m}
    	r^{- \bm{1}^{\top} Ah} \omega_N(j^{\top} (y- Ah) )  \label{poly_formula0} \\ 
        & = \frac{1}{N^n}
        \sum_{j\in \left\{ 0,1,\ldots,N-1 \right\}^n}
        \omega_N(j^{\top} y)
        \prod_{l=1}^m\sum_{h_l = 0}^{N-1}r^{- \bm{1}^{\top} A_{l}h_l}
        \omega_N(- j^{\top} A_{l}h_l) \label{poly_formula}.
    \end{align}
\end{corollary}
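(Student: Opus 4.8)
The plan is to evaluate the double sum in \eqref{poly_formula0} directly as a Laurent polynomial in $r$, read off the coefficient of $r^{-{\bf 1}^{\top} y}$, and then derive \eqref{poly_formula} from \eqref{poly_formula0} by an elementary factorization. No analytic input beyond the aliasing identity \eqref{eqn:omega} (already used in the proof of Theorem~\ref{thm:estimate}) is needed.

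First I would collapse the inner sum over $j$. Extending $\omega_N$ to the $N$-periodic function $h \mapsto \exp(2\pi i h/N)$ on $\ZZ$ and writing $j^{\top}(y - Ah) = \sum_{l=1}^{n} j_l (y - Ah)_l$, the sum $\sum_{j \in \{0,1,\ldots,N-1\}^n} \omega_N(j^{\top}(y - Ah))$ factors as $\prod_{l=1}^{n} \sum_{j_l = 0}^{N-1} \omega_N(j_l (y - Ah)_l)$, which by \eqref{eqn:omega} equals $N^n$ when $Ah \equiv y \pmod{N}$ in every coordinate and $0$ otherwise. Hence \eqref{poly_formula0} equals $\sum_{h} r^{-{\bf 1}^{\top} Ah}$, where $h$ runs over all $h \in \{0,1,\ldots,N-1\}^m$ with $Ah \equiv y \pmod{N}$ coordinatewise; in particular it is a polynomial in $r^{-1}$ with nonnegative integer coefficients.

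Next I would extract the coefficient of $r^{-{\bf 1}^{\top} y}$, which by the previous step is the number of $h \in \{0,1,\ldots,N-1\}^m$ such that $Ah \equiv y \pmod{N}$ coordinatewise and ${\bf 1}^{\top} Ah = {\bf 1}^{\top} y$. The step I expect to be the crux is to show that, because $A$ and $h$ are nonnegative, this set is exactly $\{h : Ah = y\}$: from $Ah \equiv y \pmod{N}$ write $(Ah)_l = y_l + N t_l$ with $t_l \in \ZZ$; nonnegativity of $A$ and $h$ gives $(Ah)_l \geq 0$, and since $0 \leq y_l \leq \|y\|_{\infty} = N-1$ this forces $t_l \geq 0$, so $Ah \geq y$ coordinatewise; combining with ${\bf 1}^{\top} Ah = {\bf 1}^{\top} y$ yields $Ah = y$. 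Conversely every $h$ with $Ah = y$ clearly lies in this set. Therefore the coefficient of $r^{-{\bf 1}^{\top} y}$ in \eqref{poly_formula0} equals $|\{h \in \{0,1,\ldots,N-1\}^m : Ah = y\}|$.

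It then remains to discard the box constraint $h \in \{0,1,\ldots,N-1\}^m$: I claim every integral point $x \in \ZZ^m_{\geq 0}$ of $P(y)$ automatically satisfies $x_l \leq N-1$ for all $l$. Indeed, by the standing assumption each column $A_l$ of the nonnegative integer matrix $A$ has an entry $A_{il} \geq 1$, whence $x_l \leq A_{il} x_l \leq (Ax)_i = y_i \leq \|y\|_{\infty} = N - 1$. Thus $\{x \in \ZZ^m_{\geq 0} : Ax = y\} \subseteq \{0,1,\ldots,N-1\}^m$ and the coefficient of $r^{-{\bf 1}^{\top} y}$ in \eqref{poly_formula0} is precisely $f_A(y)$. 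Finally, \eqref{poly_formula} is obtained from \eqref{poly_formula0} by distribution: since ${\bf 1}^{\top} Ah = \sum_{l=1}^{m} {\bf 1}^{\top} A_l h_l$ and $\omega_N$ is multiplicative, $r^{-{\bf 1}^{\top} Ah} \omega_N(j^{\top}(y - Ah)) = \omega_N(j^{\top} y) \prod_{l=1}^{m} r^{-{\bf 1}^{\top} A_l h_l} \omega_N(-j^{\top} A_l h_l)$, so summing over $h = (h_1,\ldots,h_m) \in \{0,1,\ldots,N-1\}^m$ splits into the stated product of $m$ univariate sums.
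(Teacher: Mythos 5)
Your proof is correct, and it takes a more self-contained route than the paper. The paper derives the corollary as a two-line consequence of Theorem~\ref{thm:estimate}: it starts from the infinite series \eqref{eqn:estimate1}--\eqref{eqn:estimate2}, reads off $f_A(y)$ as the constant term, and then argues that truncating $h$ to the box $\{0,\dots,N-1\}^m$ loses nothing because any $h$ with some $h_i\ge N$ has $(Ah)_l>y_l$ for some $l$ and hence cannot contribute to that constant term. You instead work entirely inside the finite sum \eqref{poly_formula0}: you collapse the $j$-sum via the aliasing identity \eqref{eqn:omega}, identify the coefficient of $r^{-{\bf 1}^\top y}$ as $\#\{h\in\{0,\dots,N-1\}^m : Ah\equiv y\ (\mathrm{mod}\ N),\ {\bf 1}^\top Ah={\bf 1}^\top y\}$, and then use nonnegativity together with $\|y\|_\infty<N$ to pin this down to $\{h:Ah=y\}$ (the periodic congruence plus $Ah\ge 0$ forces $Ah\ge y$, and matching the total then forces equality), finally checking that the box constraint is vacuous. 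The underlying cancellation mechanism is the same, but your version has the mild advantage of never invoking the infinite series or its absolute convergence; it is a purely combinatorial Laurent-polynomial identity and in that sense stands on its own without Theorem~\ref{thm:estimate} as a prerequisite.
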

Here we regard $r$ as an indeterminate.
\begin{proof}
	In the formula \eqref{eqn:estimate1}, 
	if $h \in \ZZ^m_{\geq 0}$ has $h_i \geq N$, then $(Ah)_l > y_l$ for some $l$, 
	and $h$ does not contribute to the constant term $f_A(y)$.
	The claim follows from this fact and 
	the observation $\sum_{h \in \{0,1,\ldots,N-1\}^n} r^{- {\bf 1}Ah} \omega_N(-j^{\top} Ah) 
	= \prod_{l=1}^m\sum_{h_l = 0}^{N}r^{- \bm{1}^{\top} A_{l}h_l}
	\omega_N(-j^{\top} A_{l}h_l)$.
\end{proof}
Our goal is to compute the coefficient of $r^{- {\bf 1}^{\top}y}$ in \eqref{poly_formula}.
Instead of numerical computation of the trigonometric function $\omega_N$, 
we develop an algebraic algorithm.
Let $\QQ[\ZZ/N\ZZ]$ denote the group ring of 
the cyclic group $\ZZ/N\ZZ = \{0,1,2,\ldots,N-1\}$ of order $N$.
Namely, $\QQ[\ZZ/N\ZZ]$ consists of polynomials with variable $s$,  rational coefficients, and degree at most $N-1$, 
in  which the multiplication rule is given by $s^{l} \cdot s^{l'} = s^{l+l' \bmod N}$.
Consider the bivariate polynomial ring $\QQ[\ZZ/N\ZZ][t]$ with variables $s,t$. 
Then $p(s,t) \mapsto p(\omega_N(1),t)$ is a ring homomorphism 
from $\QQ[\ZZ/N\ZZ][t]$ to $\CC[t]$.
Letting $t = r^{-1}$ and $s = \omega_N(1)$ in \eqref{poly_formula0}, 
we obtain a polynomial in $\QQ[\ZZ/N\ZZ][t]$: 
\begin{equation}
\frac{1}{N^n}
\sum_{j\in \left\{ 0,1,\ldots,N-1 \right\}^n} \sum_{h \in \{0,1,\ldots,N-1\}^m} 
t^{\bm{1}^{\top} Ah} s^{j^{\top} (y- Ah)}. \label{eqn:ts}
\end{equation}
Let $\overline{f_A}(s) \in \QQ[\ZZ/N\ZZ]$ denote the coefficient 
of $t^{\bm{1}^{\top} y}$ in \eqref{eqn:ts}.
Then it holds
\begin{equation}
\overline{f_A}(\omega_N(1)) = f_A(y).
\end{equation}
Our algorithm first computes $\overline{f_A}(s)$, and then computes $\overline{f_A}(\omega_N(1)) = f_A(y)$.
\begin{lemma}
    $\overline{f_A}(s) = a_0 + a_1 s + \cdots + a_{N-1} s^{N-1}$ can be computed in $O(\poly (m, n,\allowbreak N) N^n)$ time and $O(\poly (n,m,N))$ space.
\end{lemma}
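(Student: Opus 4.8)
The plan is to compute the polynomial $\overline{f_A}(s) \in \QQ[\ZZ/N\ZZ]$ directly from the product form in \eqref{poly_formula}, working entirely inside the ring $\QQ[\ZZ/N\ZZ][t]$ so that no numerical evaluation of $\omega_N$ is ever needed. Setting $t = r^{-1}$ and $s = \omega_N(1)$, the inner factor for column $l$ becomes the element
\[
P_{j,l}(s,t) \;\coloneqq\; \sum_{h_l=0}^{N-1} t^{\,\bm 1^{\top} A_l h_l}\, s^{-\,(j^{\top} A_l)\, h_l} \;\in\; \QQ[\ZZ/N\ZZ][t],
\]
and the quantity whose $t^{\bm 1^{\top}y}$-coefficient we want is $\tfrac{1}{N^n}\sum_{j} s^{\,j^{\top}y}\prod_{l=1}^m P_{j,l}(s,t)$. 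The exponent of $t$ never exceeds $(N-1)\,\bm 1^{\top}A_l \le \poly(m,n,N)$ in each factor, so every polynomial in $t$ that arises has degree $\poly(m,n,N)$; moreover we only ever need the coefficients of $t^0,\dots,t^{\bm 1^{\top}y}$, so we may truncate all $t$-arithmetic at degree $\bm 1^{\top}y \le mN$. The coefficients live in $\QQ[\ZZ/N\ZZ]$, which is an $N$-dimensional $\QQ$-vector space; a product of two such coefficient-polynomials is an $O(N \log N)$ (or naively $O(N^2)$) operation, and reduction of exponents mod $N$ is free.

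The algorithm then is: loop over the $N^n$ indices $j \in \{0,\dots,N-1\}^n$; for each $j$, build the $m$ factors $P_{j,l}$ (each is a sum of $N$ monomials, cost $\poly(n,m,N)$ to assemble), multiply them together one at a time while truncating the $t$-degree at $\bm 1^{\top}y$ (so at most $m$ multiplications, each of cost $\poly(m,n,N)$), multiply the result by the monomial $s^{j^{\top}y}$, and accumulate into a running total. After the loop, divide by $N^n$ and extract the coefficient of $t^{\bm 1^{\top}y}$, which is precisely $\overline{f_A}(s) = a_0 + a_1 s + \cdots + a_{N-1}s^{N-1}$. Correctness is immediate from Corollary~\ref{corollary:integerpointsinpolytopes} and the fact that $p(s,t)\mapsto p(\omega_N(1),t)$ is a ring homomorphism, so that the computation commutes with the substitution $s=\omega_N(1)$. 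For the complexity: the $j$-loop contributes the factor $N^n$; all work inside one iteration — assembling $m$ factors, performing $m$ truncated polynomial multiplications over $\QQ[\ZZ/N\ZZ]$, each of which is $\poly(m,n,N)$ ring operations on $\poly(m,n,N)$-many coefficients each of size $\poly(N)$ in $\QQ[\ZZ/N\ZZ]$ — is $\poly(m,n,N)$; so the total time is $O(\poly(m,n,N)\,N^n)$. Only one accumulator of $\poly(m,n,N)$ many $\QQ[\ZZ/N\ZZ]$-coefficients is kept at any time, plus the scratch space for the current $j$, so the space is $O(\poly(n,m,N))$.

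The one point needing care — and the main thing to pin down rather than an outright obstacle — is the bit-size of the rational coefficients $a_0,\dots,a_{N-1}$ and of all intermediate values. A priori the prefactor $1/N^n$ could produce denominators, but it is cleaner to carry everything without the $1/N^n$ (i.e., compute $N^n\,\overline{f_A}(s)$, which has nonnegative \emph{integer} coefficients bounded by $N^n f_A(y) \le N^n (N-1)^m$, hence of bit-length $\poly(n,m,\log N)$) and divide only at the very end, where divisibility is guaranteed since $\overline{f_A}(\omega_N(1)) = f_A(y) \in \ZZ$ forces the integrality of the final answer. With this normalization every coefficient appearing during the computation is a nonnegative integer of polynomial bit-length, the arithmetic operations are on numbers of polynomial size, and the claimed bounds follow. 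I would close by remarking that the final step $\overline{f_A}(s) \mapsto \overline{f_A}(\omega_N(1)) = f_A(y)$ is then a single evaluation handled in the next lemma, so no numerical evaluation of $\omega_N$ occurs anywhere in this lemma.
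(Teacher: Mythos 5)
Your proposal takes essentially the same approach as the paper: loop over $j \in \{0,\dots,N-1\}^n$, for each $j$ compute the product of the $m$ factors in $\QQ[\ZZ/N\ZZ][t]$ modulo $t^{\,\bm 1^{\top}y+1}$, accumulate $s^{j^{\top}y}$ times the top-degree coefficient, and divide by $N^n$ at the end, which gives the $O(\poly(n,m,N)\,N^n)$-time, $\poly(n,m,N)$-space bound. One small slip in your bit-size argument: the coefficients of $N^n\overline{f_A}(s)$ are \emph{not} bounded by $N^n f_A(y)$ (nor is $f_A(y)$ bounded by $(N-1)^m$ rather than $N^m$) --- the aliasing contributions $K_i(1+s^i+\cdots+s^{N-i})$ for divisors $i<N$ also add in, e.g.\ in the paper's first example the $s^0$-coefficient of $N^n\overline{f_A}(s)$ is $120$ while $N^n f_A(y)=108$ --- but the cruder bound $N^n\binom{d+m-1}{m-1}$ that the paper uses (or simply $N^{n+m}$) on all intermediate integers still has polynomial bit-length, so the stated complexity is unaffected.
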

\begin{proof}
	Let $d \coloneqq {\bf 1}^\top y$.
	From \eqref{poly_formula}, we first consider the computation of 
	the coefficient $b_j(s)$ of $t^{d}$ in
	\begin{equation}
	\prod_{l=1}^m\sum_{h_l = 0}^{N-1}t^{\bm{1}^{\top} A_{l}h_l}
	s^{- j^{\top} A_{l}h_l} \label{eqn:prod} 
	\end{equation}
	for fixed $j$.
    It suffices to compute the above polynomial \eqref{eqn:prod} modulo $(t^{d+1})$, which can be written as
    \begin{equation}
    \prod_{l=1}^m s^{k^l_0} + s^{k^l_1} t^1 + \cdots + s^{k^l_{d}} t^{d} \mod (t^{d+ 1}). \label{eqn:prod2} 
    \end{equation}
    for some $k^l_i \in \{0,1,2,\ldots,N-1\}$.	
    The integers $k^l_i$ are obtained in $O(\poly (n,m,N\allowbreak ))$ time by computing $\bm{1}^{\top} A_l h_l (\leq d)$ and $j^{\top} A_l h_l (\bmod N)$ for 
    $l =1,2,\ldots,m, h_l =0,1,2,\ldots,N-1$.
    Expand \eqref{eqn:prod2} to the form 
    $\alpha_0(s) + \alpha_1(s) t^1 + \cdots + \alpha_{d}(s) t^{d}$, 
    and obtain $b_j(s) = \alpha_{d}(s)$. 
    This computation is done by the multiplication
    of $m$ polynomials with degree $d$ modulo $(t^{d+1})$, where their coefficients are polynomials with degree $N-1$ in multiplication rule $s^{l} s^{l'} = s^{l + l' \bmod N}$.
    Now $\overline{f_A}(s)$ is the sum of 
    $s^{j^{\top}y} b_j(s)$ over $j \in \{ 0,1,\ldots,N-1 \}^n$ divided by $N^n$.
    Thus $\overline{f_A}(s)$ is obtained in $O(\poly (n,m,N) N^n)$ arithmetic operations (over $\ZZ$).
    
    Finally we estimate the bit-size required for the computation.
    It suffices to estimate the size of $b_j(s)$, 
    which is the sum of at most ${d+m-1 \choose d}$ terms of form $s^{k}$. 
    Then the coefficients of $b_j(s)$ have bit-length $O(\poly (n,m,N))$.
    Thus the required bit-size is at most $\poly (n,m,N) n \log N$. 
\end{proof}
Next we consider how to compute
$\overline{f_A}(\omega_N(1)) = f_A(y)$ from $\overline{f_A}(s)$.
\begin{lemma} $\overline{f_A}(s)$ is written as
	\[
	\overline{f_A}(s) = f_A(y) + \sum_{i} K_i(1 + s^i + s^{2i} + \cdots + s^{N-i}).
	\]
	where the sum is taken over divisors $i < N$ of $N$ with some coefficient $K_i \in \QQ$. 
\end{lemma}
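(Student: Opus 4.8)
The plan is to evaluate $\overline{f_A}(s)$ almost explicitly from its definition and then read off the claimed decomposition. For each divisor $i$ of $N$, write $\sigma_i(s) \coloneqq 1 + s^i + s^{2i} + \cdots + s^{N-i} \in \QQ[\ZZ/N\ZZ]$, which is the indicator of the subgroup $i\,\ZZ/N\ZZ$ of order $N/i$; in particular $\sigma_N(s) = 1$. Unfolding the definition: extracting the coefficient of $t^{\bm 1^\top y}$ in \eqref{eqn:ts} kills every $h \in \{0,\dots,N-1\}^m$ except those with $\bm 1^\top Ah = \bm 1^\top y =: d$, which leaves
\[
\overline{f_A}(s) \;=\; \frac{1}{N^n}\sum_{\substack{h \in \{0,\dots,N-1\}^m \\ \bm 1^\top Ah = d}} T(y - Ah), \qquad T(w) \coloneqq \sum_{j \in \{0,\dots,N-1\}^n} s^{\,j^\top w \bmod N}.
\]

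Next I would evaluate $T(w)$ for $w \in \ZZ^n$. Every exponent $j^\top w \bmod N$ lies in the subgroup $H_w \coloneqq g\,\ZZ/N\ZZ$ of $\ZZ/N\ZZ$ generated by $w_1,\dots,w_n$, where $g \coloneqq \gcd(w_1,\dots,w_n,N)$; and for each $x \in H_w$ we may write $x = j_0^\top w$ with $j_0 \in \{0,\dots,N-1\}^n$, so re-indexing the sum by $j \mapsto j + j_0$ (componentwise mod $N$) shows $s^x T(w) = T(w)$. Hence $T(w)$ is a scalar multiple of $\sum_{x \in H_w} s^x = \sigma_g(s)$, and comparing the number of terms ($N^n$) with $|H_w| = N/g$ gives $T(w) = N^{n-1} g\,\sigma_g(s)$. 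Substituting back,
\[
\overline{f_A}(s) \;=\; \frac{1}{N}\sum_{\substack{h \in \{0,\dots,N-1\}^m \\ \bm 1^\top Ah = d}} g(h)\,\sigma_{g(h)}(s), \qquad g(h) \coloneqq \gcd\bigl((y-Ah)_1,\dots,(y-Ah)_n,N\bigr).
\]

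The heart of the argument is to identify which terms land on $\sigma_N$. I claim that, for $h$ in this sum, $g(h) = N$ holds if and only if $Ah = y$, and that the set of such $h$ is exactly the set of the $f_A(y)$ nonnegative integer solutions of $Ax = y$. Indeed, if $g(h) = N$ then $N \mid (y - Ah)_l$ for all $l$; nonnegativity of $A$ together with $0 \le y_l \le \|y\|_\infty < N$ gives $(y-Ah)_l = y_l - (Ah)_l \le y_l < N$, so $(y-Ah)_l \le 0$, whence $\bm 1^\top(y-Ah) = d - d = 0$ forces $y = Ah$ (the converse being trivial). Moreover, if $x \ge 0$ is integral with $Ax = y$, then each column $A_k$ has an entry $A_{lk} \ge 1$, so $x_k \le (Ax)_l = y_l < N$; thus $x \in \{0,\dots,N-1\}^m$, automatically $\bm 1^\top Ax = d$, and $T(y-Ax) = T(0) = N^n$. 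Splitting the sum according to $g(h) = N$ versus $g(h) < N$ then yields
\[
\overline{f_A}(s) \;=\; \frac{1}{N}\sum_{h\,:\,Ah = y} N\,\sigma_N(s) \;+\; \sum_{\substack{i \mid N \\ i < N}} K_i\,\sigma_i(s), \qquad K_i \coloneqq \frac{i}{N}\,\bigl|\{\, h \in \{0,\dots,N-1\}^m : \bm 1^\top Ah = d,\ g(h) = i \,\}\bigr| \in \QQ,
\]
and the first term equals $f_A(y)\,\sigma_N(s) = f_A(y)$, which is the asserted form.

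I expect the identification step to be the only real obstacle: the rest is routine group-ring bookkeeping, but showing that $g(h) = N$ \emph{forces} $Ah = y$ is precisely where the hypotheses that $A$ is nonnegative, that $y \ge 0$, and that $N = \|y\|_\infty + 1$ are used, and it is the algebraic counterpart of the aliasing cutoff ``$(Ah-y)_l \le -N$ cannot occur'' in the proof of Theorem~\ref{thm:estimate}. One should also note that we never need the $\sigma_i(s)$ to be linearly independent: the computation above already exhibits one decomposition of the required shape, which is all the lemma asks for.
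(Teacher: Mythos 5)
Your proof is correct and follows essentially the same route as the paper: your $g(h) = \gcd\bigl((y-Ah)_1,\dots,(y-Ah)_n,N\bigr)$ is exactly the paper's $n_h$ (the generator of the image of the homomorphism $\varphi_h\colon j\mapsto j^\top(y-Ah)\bmod N$), and your $N^{n-1}g(h)$ is the paper's $J_h=\lvert\ker\varphi_h\rvert$. The one place you add value is the final paragraph: the paper simply asserts ``Notice $K_N = f_A(y)$'' without argument, whereas you actually prove that $g(h)=N$ forces $Ah=y$ (using nonnegativity, $\|y\|_\infty<N$, and the constraint $\bm 1^\top Ah=\bm 1^\top y$) and that every nonnegative solution of $Ax=y$ lands in the index range $\{0,\dots,N-1\}^m$, which is the step that genuinely needs the hypotheses on $A$, $y$, and $N$.
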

\begin{proof}
	Regard $\{0,1,\ldots,N-1\}^n$ as $(\ZZ/N\ZZ)^n$.
	Then the map $\varphi_h :(\ZZ/N\ZZ)^n \to \ZZ/N\ZZ$ 
	defined by $j \mapsto j^{\top} (y - Ah) \bmod N$ is a 
	group homomorphism.
	Therefore the image of $\varphi_h$ is the cyclic group $n_h \ZZ/N \ZZ$ 
	for some divisor $n_h$ of $N$. 
	Also the number of inverse images of each $k \in \{0,1,\ldots,N/n_h-1\}$ is given by $J_h \coloneqq |\ker \varphi_h|$.
	Then,  for $h \in \{0,1,\ldots,N-1\}^m$ with  
	${\bf 1}^{\top}Ah  = {\bf 1}^{\top}y$, it holds
		\begin{align}
		& \sum_{j\in \left\{ 0,1,\ldots,N-1 \right\}^n}
		t^{\bm{1}^{\top} Ah} s^{j^{\top} (y- Ah)}  \\
		& = t^{\bm{1}^{\top} y} \sum_{j\in \left\{ 0,1,\ldots,N-1 \right\}^n}
		s^{j^{\top} (y- Ah)} = t^{\bm{1}^{\top} y} J_h (1+ s^{n_h} + s^{2n_h} + \cdots + s^{N-n_h}).
		\end{align}
		Thus we have
		\begin{align}
		\overline{f_A}(s) & = \sum_{h \in \{0,1,\ldots,N-1\}^m} J_h (1+ s^{n_h} + s^{2n_h} + \cdots + s^{N-n_h}) \\
		& = \sum_{i: {\rm divisor\ of\ } N} K_{i} (1+ s^{i} + s^{2i} + \cdots + s^{N-i}) , \label{eqn:divisor}
		\end{align}
		where $K_{i}$ is the sum of $J_h$ over $h \in (\ZZ/N\ZZ)^n$ 
		such that the image of $\varphi_h$ is $i\ZZ/N\ZZ$.
		Notice $K_N = f_A(y)$.
\end{proof}
According to this lemma,  we obtain a simple algorithm to 
compute $f_A(y)$ from $\overline{f_A}(s)$ as follows.
\begin{itemize}
	\item[0:] Let $\overline{f_A}(s) = a_0 + a_1s + \cdots + a_{N-1} s^{N-1}$.
	\item[1:] If $a_i = 0$ for all $i > 0$, then output $a_0 = f_A(y)$; stop 
	\item[2:] Choose the minimum index $i > 0$ with $a_i \neq 0$.
	Let $a_j \leftarrow a_j - a_i$ for each index $j$ that is the multiple of the index $i$, and go to step 1.
\end{itemize}
The correctness of the algorithm is clear from the above lemma:
The chosen index $i$ in step 2 is a divisor of $N$ with $a_i = K_i$.
Hence the algorithm computes 
$\overline{f_A}(s) - \sum_{i} K_i (1+ s^{1} + \cdots + s^{N-i})$.   
After at most $N$ iterations, the algorithm terminates
and outputs the correct answer $f_{A}(y)$.

\begin{example}
    Consider the following matrix $A$ and vector $y$:
    \begin{equation}
        A = \begin{pmatrix}
            1 & 1 & 3 \\
            1 & 1 & 1
        \end{pmatrix}
        ,
        y = \begin{pmatrix}
            5 \\
            3
        \end{pmatrix}.
    \end{equation}
    Then the polytope $\{ x\in \RR^3 \colon Ax = y, x\geq 0\}$ has three integer points:
    \begin{equation}
        \begin{pmatrix} 1 \\ 1 \\ 1\end{pmatrix},
        \begin{pmatrix} 2 \\ 0 \\ 1\end{pmatrix},
        \begin{pmatrix} 0 \\ 2 \\ 1\end{pmatrix}.
    \end{equation}
    Let us count the integer points according to Corollary~\ref{corollary:integerpointsinpolytopes}.
    The coefficient $\overline{f_A}(s)$ of $t^{8}$ in
    \begin{equation}
        \frac{1}{6^2} \sum_{j_1=0}^5 \sum_{j_2=0}^5 s^{5j_1 + 3j_3}
        \left(
        \sum_{h_1 = 0}^5 t^{2h_1}s^{(-j_1 -j_2)h_1}
        \right)^2
        \left(
        \sum_{h_2 = 0}^5 t^{4h_2}s^{(-3j_1-j_2)h_2}
        \right)
        \\
    \end{equation}
    is
    \begin{align}
        & \frac{1}{6^2} \sum_{j_1=0}^5 \sum_{j_2=0}^5 s^{5j_1 + 3j_3}
        \left( 5s^{-4j_1-4j_2} + 3s^{-5j_1-3j_2} + s^{-6j_1-j_2} \right) 
        \\
        =&\frac{1}{36} \sum_{j_1=0}^5 \sum_{j_2=0}^5
        \left(
         5 s^{j_1 - j_2} + 3 + s^{-j_1+2j_2}
        \right)
        \\
        =&\frac{1}{36}
        \left(
        120 + 12s + 12 s^2 + 12 s^3 + 12 s^4 + 12 s^5
        \right)
        = 3 + \frac{1}{3}\left( 1+ s + s^2 + s^3 + s^4 + s^5\right).
    \end{align}
    Therefore we obtain $\overline{f_A}(\omega_6(1)) = 3 = f_A(y)$.
\end{example}
\begin{example}
    Consider the following matrix $A$ and vector $y$:
    \begin{equation}
        A = \begin{pmatrix}
            1 & 2 \\
            2 & 1 
        \end{pmatrix}
        ,
        y = \begin{pmatrix}
            7 \\
            5
        \end{pmatrix}
    \end{equation}
    Then the polytope $\{ x\in \RR^3 \colon Ax = y, x\geq 0\}$ has only one integer point:
    \begin{equation}
        \begin{pmatrix} 3 \\ 2 \end{pmatrix}.
    \end{equation}
    Consider the coefficient $\overline{f_A}(s)$ of $t^{12}$ in
    \begin{equation}
        \frac{1}{8^2}\sum_{j_1=0}^7\sum_{j_2=0}^7
        s^{7j_1 + 5j_2}
        \left( \sum_{h_1=0}^7 t^{3h_1}s^{(-2j_1-j_2)h_1}\right)
        \left( \sum_{h_2=0}^7 t^{3h_2}s^{(-j_1-2j_2)h_2}\right).
    \end{equation}
    By calculation, 
    $\overline{f_A}(s)$ is 
    \begin{align}
        &\frac{1}{8^2}
        \sum_{j_1=0}^7\sum_{j_2=0}^7
        s^{7j_1+5j_2}
        \left( s^{-8j_1-4j_2} + s^{-7j_1-5j_2} + s^{-6j_1-6j_2} + s^{-5j_1-7j_2} + s^{-4j_1-8j_2}\right)
        \\
        =& \frac{1}{64}
        \sum_{j_1=0}^7\sum_{j_2=0}^7
        \left( s^{-j_1+j_2} + 1 + s^{j_1-j_2} + s^{2j_1-2j_2} + s^{3j_1-3j_2}\right)
        \\
        =& \frac{1}{64}\left( 104 + 24s + 40s^2 + 24s^3 + 40 s^4 + 24s^5 + 40s^6 + 24s^7\right)
        \\
        =& 1 + 
        \frac{3}{8}\left( 1 + s + s^2 + s^3 + s^4 + s^5 + s^6 + s^7\right)
        + \frac{1}{4}\left(1 + s^2 +s^4 +s^6\right).
    \end{align}
    Thus we obtain $\overline{f_A}(\omega_8(1)) = 1 = f_A(y)$.
\end{example}

\section{Hypergraph Matching}\label{sec:match}

We next show Theorem~\ref{thm:main2} in a generalized form.
Let $A$ be an $n \times m$ nonnegative integer matrix.
For each column index $l \in \{1,2,\ldots, m\}$, 
consider subset $F_l$ consisting of row indices $k \in \{1,2,\ldots,n\}$ 
with $A_{kl} > 0$.
Let ${\cal H}(A)$ denote the hypergraph on vertex set $\{1,2,\ldots,n\}$
and hyperedge set $\{  F_l : l =1,2,\ldots,m\}$.
By a {\em stable} set of ${\cal H}(A)$ we mean a vertex subset 
$S \subseteq \{1,2,\ldots,n\}$ such that every hyperedge 
meets at most one vertex in $S$. 
\begin{theorem}
	Suppose that we are given a stable set $S$ of ${\cal H}(A)$.
	For $y \in \ZZ^n$, 
	we can compute $f_A(y)$ in $O(\poly(n,m,\|y\|_{\infty}) (\|y\|_{\infty}+1)^{n - |S|})$ time and   $O(\poly(n,m,\|y\|_{\infty})$ space.
\end{theorem}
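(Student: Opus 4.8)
The plan is to revisit the computation of $\overline{f_A}(s)$ from Section~\ref{sec:algo}, but to perform the discrete Fourier sum in \eqref{poly_formula} only over the $n-|S|$ coordinates $j_k$ with $k\notin S$, carrying out the remaining $|S|$-fold sum symbolically in polynomial time. We may assume $y\in\ZZ_{\geq 0}^n$, since $f_A(y)=0$ whenever some entry of $y$ is negative ($A$ being nonnegative); put $N\coloneqq\|y\|_\infty+1$ and $d\coloneqq{\bf 1}^\top y$ as before.

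First I would record the column decomposition induced by $S$. Since $S$ is stable, each hyperedge $F_l$ meets at most one vertex of $S$; let $\sigma(l)$ denote that vertex when it exists, and $\sigma(l)=*$ otherwise. This partitions the columns $\{1,\ldots,m\}$ into blocks $C_k\coloneqq\{l:\sigma(l)=k\}$ for $k\in S$, together with $C_*\coloneqq\{l:\sigma(l)=*\}$. The key point is that a column $l\in C_k$ has $A_{k'l}=0$ for every $k'\in S\setminus\{k\}$; consequently, in the product $\prod_{l=1}^m\sum_{h_l=0}^{N-1}r^{-{\bf 1}^\top A_l h_l}\omega_N(-j^\top A_l h_l)$ the partial product over $C_k$ depends, among the variables $\{j_{k'}:k'\in S\}$, only on $j_k$, while the partial product over $C_*$ depends on none of them.

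Next, fix the $n-|S|$ coordinates $j_k$ with $k\notin S$; there are $N^{n-|S|}$ such partial assignments. For each of them, writing $P_k(r^{-1};j_k)\coloneqq\prod_{l\in C_k}\sum_{h_l=0}^{N-1}r^{-{\bf 1}^\top A_l h_l}\omega_N(-j^\top A_l h_l)$ (the contribution of the fixed coordinates $j_{k'}$, $k'\notin S$, to each exponent being a constant), the sum over the remaining coordinates $j_S=(j_k)_{k\in S}$ in \eqref{poly_formula} factorizes, since $\omega_N$ is a character and the blocks $C_k$ ($k\in S$) involve pairwise disjoint sets of the variables $\{j_{k'}:k'\in S\}$:
\[
\sum_{j_S\in\{0,\ldots,N-1\}^S}\omega_N(j_S^\top y_S)\prod_{k\in S}P_k(r^{-1};j_k)
=\prod_{k\in S}\left(\sum_{j_k=0}^{N-1}\omega_N(j_k y_k)\,P_k(r^{-1};j_k)\right).
\]
Each factor on the right is a sum of $N$ polynomials in $r^{-1}$ of degree at most $d$, which we compute exactly as in the lemma producing $\overline{f_A}(s)$ --- truncating to degree $d$ in $r^{-1}$ and keeping $s=\omega_N(1)$ symbolic in the group ring $\QQ[\ZZ/N\ZZ]$, so as to avoid numerical evaluation of $\omega_N$ --- in $\poly(n,m,N)$ time and space. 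Multiplying the $|S|$ resulting factors together with the $j_S$-independent block over $C_*$ and the scalar $\prod_{k\notin S}\omega_N(j_k y_k)$, again truncated to degree $d$, is another $\poly(n,m,N)$-time step. Summing the results over the $N^{n-|S|}$ partial assignments, dividing by $N^n$, and reading off the coefficient of $r^{-{\bf 1}^\top y}$ yields $\overline{f_A}(s)$, from which $f_A(y)=\overline{f_A}(\omega_N(1))$ is recovered exactly by the algorithm of Section~\ref{sec:algo} that recovers $f_A(y)$ from $\overline{f_A}(s)$. Since the partial assignments are processed one at a time and every intermediate object is a polynomial in $\QQ[\ZZ/N\ZZ][t]$ of polynomial size with coefficients of bit-length $O(\poly(n,m,N))$, the space stays $O(\poly(n,m,\|y\|_\infty))$, giving the claimed bounds.

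I expect the only genuine content --- and the place to be careful --- to be this factorization of the $j_S$-sum: one must verify that stability of $S$ really does force the column blocks $C_k$ ($k\in S$) to involve pairwise disjoint sets of the indices $\{j_{k'}:k'\in S\}$, so that $\sum_{j_S}$ splits into a product of $|S|$ independent single-index sums, each evaluable in polynomial time; everything else is the bookkeeping (degree truncation, bit-size accounting, group-ring arithmetic in place of numerical $\omega_N$) already set up in Section~\ref{sec:algo}. Finally, applying this with $S$ taken to be a largest part of a $k$-partite hypergraph, so that $|S|\geq n/k$ and hence $n-|S|\leq(1-1/k)n$, yields Theorem~\ref{thm:main2}.
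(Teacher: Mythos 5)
Your proposal is correct and follows essentially the same route as the paper: you partition the columns into blocks $C_k$ ($k\in S$) and $C_*$ exactly as the paper partitions $A$ into $(A^0\,A^1\,\cdots\,A^\nu)$, observe that stability makes each block's contribution depend on only one of the variables $\{j_k : k\in S\}$ so the inner $|S|$-fold sum factors into $|S|$ independent single-index sums, evaluate each factor symbolically in $\QQ[\ZZ/N\ZZ][t]$ truncated modulo $t^{d+1}$, and enumerate the remaining $N^{n-|S|}$ assignments of $j'$ one at a time to keep the space polynomial. This matches the paper's factorization $\sum_{j'}F_0(j')F_1(j')\cdots F_\nu(j')$ and its proof step by step.
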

\begin{proof}
	Let $N =\|y\|_{\infty} +1$, $d \coloneqq {\bf 1}^{\top} y$, and $\nu \coloneqq |S|$.
	As before, it suffices to compute \eqref{eqn:ts} modulo $(t^{d+1})$.
	We show that \eqref{eqn:ts} admits the following factorization:
	\begin{equation}
	\sum_{j} \sum_{h} 
	t^{\bm{1}^{\top} Ah} s^{j^{\top} (y- Ah)} 
	= \sum_{j' \in \{ 0,1,\ldots,N-1 \}^{n-\nu}} 
	F_0(j') F_1(j') \cdots F_{\nu}(j'), \label{eqn:factor} 
	\end{equation}
	where each $F_\alpha(j')$ is computable modulo $(t^{d+1})$ 
	in $O(\poly(n,m,N))$ time and space.
		
	 By arranging indices of $A$, 
	we can assume that $S = \{1,2,\ldots,\nu\}$, 
	and that $A$ is regarded as a block matrix $A  = (A^0\ A^1\ A^2\ \cdots A^{\nu})$, where 
    $A_\alpha$ $( \alpha = 1,2,\ldots,\nu)$
    consists of columns such that the corresponding hyperedge meets $\alpha \in S$ $(\alpha = 1,2,\ldots,\nu)$.
    Accordingly, vector $h \in \{0,1,2\ldots N-1\}^m$ is also partitioned as
    \[
    h = \left(\begin{array}{c}
    h^0 \\
    h^1 \\
    \vdots \\
    h^{\nu}
    \end{array}\right), \quad Ah = A^0h^0 + A^1 h^1 + \cdots + A^\nu h^\nu.
    \]
  
    We suppose that an $n - \nu$-dimensional vector $j' \in \{0,1,2\ldots,N-1\}^{n- \nu}$
   is embedded to $\{0,1,2\ldots,N-1\}^{n}$ 
   by filling $0$ to the first $\nu$ components.
   Each $j \in \{0,1,\ldots,N-1\}^n$ is uniquely represented as 
   $j = j' + \sum_{\alpha=1}^{\nu} j_\alpha e_{\alpha}$ for $j' \in \{0,1,2\ldots,N-1\}^{n- \nu}$, where $e_{\alpha}$ is the $\alpha$-th unit vector.
   Then we have
   \begin{eqnarray}
   j^{\top} y & = & {j'}^{\top} y + j_1 y_1 + \cdots + j_\nu y_{\nu}, \\
   j^{\top} A^{\alpha} h^{\alpha} &=&
   \left\{
   \begin{array}{ll}
   {j'}^{\top} A^{0} h^{0}  & {\rm if}\ \alpha = 0, \\
    ({j'} + j_{\alpha}e_{\alpha})^{\top} A^{\alpha} h^{\alpha} & {\rm if}\ \alpha > 0.
    \end{array}
   \right.
   \end{eqnarray}
   Define $G_0({j'}, h^0)$ and $G_\alpha({j'}, j_{\alpha}, h^{\alpha})$ $(\alpha= 1,2,\ldots, \nu)$ by
   \begin{eqnarray}
   G_0({j'}, h^0) &\coloneqq & t^{{\bf 1}^{\top} A^0h^0} s^{-{j'}^{\top} A^0h^0}, \\
       G_\alpha({j'}, j_{\alpha}, h^\alpha) &\coloneqq &  t^{{\bf 1}^{\top} A^\alpha h^\alpha} s^{- ({j'} + j_{\alpha} e_{\alpha})^{\top}A^\alpha h^\alpha}.
   \end{eqnarray}
   Then  
   \[
   t^{\bm{1}^{\top} Ah} s^{- j^{\top}Ah} = G_0({j'}, h^0) G_1 ({j'},j_1,h^1) \cdots  G_\nu ({j'},j_\nu,h^\nu). 
    \] 
    From $\sum_{j} \sum_{h} = \sum_{j'} \sum_{j_1} \sum_{j_2} \cdots \sum_{j_{\nu}} \sum_{h^0} \cdots \sum_{h^{\nu}}$, we see that the left hand side of \eqref{eqn:factor} is equal to
   \begin{equation}
   \sum_{j' \in \{0,1,\ldots,N-1\}^{n- \nu}} s^{{j'}^{\top}y} 
   \left(\sum_{h^0} G_0(j', h^0) \right)
   \prod_{\alpha=1}^{\nu} \sum_{j_{\alpha}=0}^{N-1} s^{j_\alpha y_{\alpha}}\sum_{h^{\alpha}} G_{\alpha} ({j'}, j_{\alpha}, h^{\alpha}), 
   \end{equation}
   where $h_{\alpha}$ ranges over $\{0,1,\ldots,N-1\}^{\nu_{\alpha}}$ 
   and $\nu_{\alpha}$ is the dimension of $h_{\alpha}$.
   Now $G_{\alpha}$ is a form of $t^{a^{\top}h^{\alpha}} s^{b^{\top}h^{\alpha}}$, and hence 
   $\sum_{h^{\alpha}} G_{\alpha}$ is factorized as 
   	$\prod_{l=1}^{\nu_{\alpha}} \sum_{k=0}^{N-1} t^{a_l k} s^{b_l k}$ (as in \eqref{poly_formula0}). 
   Thus each $\sum_{h_{\alpha}} G_{\alpha}$ is computable in 
   $O(\poly(n,m,N))$ time, and we have the desired expression \eqref{eqn:factor}.
\end{proof}
A hypergraph ${\cal H} = (V, {\cal E})$ is said to be {\em $k$-partite} if
there is a partition of the vertex set into $k$ nonempty subsets $S_1,S_2,\ldots,S_k$
such that each hyperedge meets at most one vertex in each $S_i$.
Clearly some $S_i$ has cardinality at least $|V|/k$. 
Hence we obtain a generalization of Theorem~\ref{thm:main2}.
\begin{corollary}
	Suppose that ${\cal H}(A)$ is $k$-partite and the partition is given.
	Then we can compute $f_A(y)$ in $O(\poly(n,m,\|y\|_\infty) (\|y\|_\infty + 1)^{(1 -1/k) n})$ time and $O(\allowbreak \poly(n,m,\|y\|_\infty)$ space.
\end{corollary}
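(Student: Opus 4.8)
The plan is to deduce the Corollary immediately from the preceding Theorem by locating a large stable set inside the given $k$-partition. The first step is to observe that each block $S_i$ of the partition is itself a stable set of $\mathcal{H}(A)$: by the definition of $k$-partiteness, every hyperedge meets at most one vertex of $S_i$, which is exactly the defining property of a stable set used in the preceding Theorem. Thus $V = S_1 \cup S_2 \cup \cdots \cup S_k$ is a partition of the $n$-element vertex set into $k$ stable sets.

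The second step is a pigeonhole count. Since $|S_1| + \cdots + |S_k| = n$, some block $S_{i^*}$ satisfies $|S_{i^*}| \ge n/k$; this block is identified in $O(\poly(n))$ time by reading off the cardinalities from the given partition. Applying the preceding Theorem with $S \coloneqq S_{i^*}$ yields an algorithm computing $f_A(y)$ in $O(\poly(n,m,\|y\|_\infty)\,(\|y\|_\infty+1)^{\,n-|S|})$ time and $O(\poly(n,m,\|y\|_\infty))$ space. Because $n - |S| \le n - n/k = (1-1/k)n$ and $\|y\|_\infty + 1 \ge 1$, the time bound is at most $O(\poly(n,m,\|y\|_\infty)\,(\|y\|_\infty+1)^{(1-1/k)n})$, as claimed, and the $O(\poly(n))$ overhead of the pigeonhole step is absorbed into the $\poly$ factor.

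There is no real obstacle here: the entire content of the result already resides in the preceding Theorem, whose factorization \eqref{eqn:factor} of the sum \eqref{eqn:ts} does the work, while the Corollary merely instantiates that Theorem with the largest available stable set. The only points to verify are the two routine facts above --- that a block of a $k$-partition satisfies the stable-set definition, and that the complexity bounds compose correctly --- neither of which requires any calculation beyond what is displayed here. As a sanity check, Theorem~\ref{thm:main2} is recovered by taking $A$ to be the vertex--hyperedge incidence matrix of $\mathcal{H}$ with $y = b$, for which $\mathcal{H}(A) = \mathcal{H}$.
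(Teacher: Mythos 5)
Your proof is correct and takes essentially the same approach as the paper: observe that each block of the $k$-partition is a stable set, select the largest one (which has size at least $n/k$ by pigeonhole), and invoke the preceding theorem.
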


Finally we note a combinatorial inclusion-exclusion formula 
for the number of perfect matchings derived from our formula (Corollary~\ref{corollary:integerpointsinpolytopes}).
A {\em perfect matching} in a hypergraph is a subset $M$
of hyperedges such that 
each vertex belongs to exactly one hyperedge in $M$.
\begin{corollary}\label{theorem:hypergraphmatching1}
    Let ${\cal H} = \left( V, \mathcal{E} \right)$ be a hypergraph. 
The number of perfect matchings in ${\cal H}$ is equal to the coefficient of $t^{|V|}$ in
\begin{equation}
    \frac{1}{2^{\lvert V \rvert}} \sum_{U \subseteq V} \left(-1\right)^{\lvert U \rvert} \prod_{F \in \mathcal{E}}
    \left( 1 +  t^{\lvert F \rvert}\left(-1\right) ^{ \lvert  F\cap U\rvert }\right).
    \label{eqn:formula_hyper}
\end{equation}
\end{corollary}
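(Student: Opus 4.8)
The plan is to obtain this as the specialization of Corollary~\ref{corollary:integerpointsinpolytopes} to the vertex--hyperedge incidence setup. Take $A \in \{0,1\}^{V \times \mathcal{E}}$ with $A_{vF} = 1$ iff $v \in F$, and $y = \bm{1}$. This $A$ is nonnegative, and each column has a nonzero entry because every hyperedge is nonempty, so the standing assumption of Section~\ref{sec:algo} holds. First I would check that $f_A(\bm{1})$ counts perfect matchings: a vector $x \in \ZZ^{\mathcal{E}}_{\geq 0}$ with $Ax = \bm{1}$ is forced to be $0$--$1$ valued (if $x_F \geq 2$ then $(Ax)_v \geq 2$ for any $v \in F$), so such $x$ are exactly the indicator vectors of subsets $M \subseteq \mathcal{E}$ covering every vertex exactly once.

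Next I would unwind Corollary~\ref{corollary:integerpointsinpolytopes} for this choice. Since $\|y\|_\infty = 1$ we have $N = 2$, the relevant cyclic group is $\ZZ/2\ZZ = \{0,1\}$, and $\omega_2(h) = (-1)^h$. Substituting the indeterminate $t = r^{-1}$ in \eqref{poly_formula0}, the corollary says that $f_A(\bm{1})$ is the coefficient of $t^{\bm{1}^\top \bm{1}} = t^{|V|}$ in
\[
\frac{1}{2^{|V|}} \sum_{j \in \{0,1\}^V} \sum_{h \in \{0,1\}^{\mathcal{E}}} t^{\bm{1}^\top A h}\,(-1)^{j^\top(\bm{1} - Ah)}.
\]
I would then identify $j \in \{0,1\}^V$ with the subset $U = \{v \in V : j_v = 1\}$, so that $(-1)^{j^\top \bm{1}} = (-1)^{|U|}$ and, for each column $F$, $j^\top A_F = \sum_{v \in F} j_v = |F \cap U|$, while $\bm{1}^\top A_F = |F|$ because $A$ is $0$--$1$.

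Finally, exactly as in the passage from \eqref{poly_formula0} to \eqref{poly_formula}, the exponents of $t$ and of $-1$ are additive over $\mathcal{E}$, so the inner sum over $h$ factors:
\[
\sum_{h \in \{0,1\}^{\mathcal{E}}} \prod_{F \in \mathcal{E}} t^{|F| h_F}(-1)^{|F \cap U| h_F} = \prod_{F \in \mathcal{E}} \bigl(1 + t^{|F|}(-1)^{|F \cap U|}\bigr).
\]
Substituting this back yields precisely \eqref{eqn:formula_hyper}, and taking the coefficient of $t^{|V|}$ finishes the proof. I do not anticipate a genuine obstacle: this is a direct specialization, and the only mildly delicate points are the reduction of integer solutions of $Ax = \bm{1}$ to perfect matchings and the clerical translation between the index sets $\{0,1\}^V$, $\{0,1\}^{\mathcal{E}}$ and, respectively, subsets of $V$ and products over hyperedges.
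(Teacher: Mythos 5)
Your proposal is correct and follows essentially the same route as the paper: set $A$ to be the vertex--hyperedge incidence matrix, observe $f_A(\bm{1})$ counts perfect matchings, specialize Corollary~\ref{corollary:integerpointsinpolytopes} to $N=2$, use $\omega_2(h)=(-1)^h$, identify $j$ with $U\subseteq V$ (so $j^\top\bm{1}=|U|$, $\bm{1}^\top A_F=|F|$, $j^\top A_F=|F\cap U|$), and factor the sum over $h$ as a product over hyperedges. The only cosmetic difference is that you start from \eqref{poly_formula0} and factor explicitly, while the paper invokes the already-factored \eqref{poly_formula}; you also spell out the (correct but implicit in the paper) check that integral solutions of $Ax=\bm{1}$ are $0$--$1$ and hence are indicator vectors of perfect matchings.
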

\begin{proof}
	Let $A \in \left\{ 0 , 1 \right\}^{\lvert V\rvert \times \lvert \mathcal{E} \rvert}$ be the adjacency matrix of the hypergraph 
    ${\cal H} = ( V, \mathcal{E})$. 
    Then a perfect matching is exactly a solution $x$ of $Ax = {\bf 1}$.
    Apply Corollary \ref{corollary:integerpointsinpolytopes} with $N = 2$.
    Then $\omega_2(k) = (-1)^{k}$, and
    $f_A({\bf 1})$ is equal to the coefficient of $t^{|V|}$ in
        \[
        \frac{1}{2^{\lvert V \rvert}} \sum_{j\in\left\{ 0, 1 \right\}^{\lvert V \rvert}}
        \left( -1 \right)^{\bm{1}^{\top} j} \prod_{l=1}^{\lvert \mathcal{E} \rvert}
        \left( 1 + t^{-\bm{1}^{\top} A_l } \left(-1\right)^{- j^{\top} A_{l}}\right).
        \]
    Identify $j \in \{0,1\}^{|V|}$ with a subset $U \subseteq V$.
    Then $j^{\top}{\bf 1} = |U|$, ${\bf 1}^{\top} A_{l} = |F|$ and $j^{\top} A_l = |U \cap F|$, where $F$ is the hyperedge corresponding to $l$-th column $A_l$ of $A$.
    Thus we have the formula.
\end{proof}
A hypergraph is said to be {\em $\ell$-uniform} if 
each hyperedge has cardinality $\ell$.
A $2$-uniform hypergraph is exactly a simple undirected graph. 
In the case of a uniform hypergraph, the coefficient of $t^{|V|}$ in (\ref{eqn:formula_hyper})
is the following simple expression. 
\begin{corollary}
    \label{cor:kuniform}
    Let ${\cal H} = \left( V, \mathcal{E} \right)$  be an $\ell$-uniform hypergraph.
    The number of perfect matchings in ${\cal H}$ is equal to
    \begin{equation}
        \frac{1}{2^n} \sum_{U \subseteq V} \left(-1\right)^{\lvert U \rvert}
        \sum_{i=0}^{|V|/\ell}\left(-1\right)^i 
        \binom{|{\cal E}_{U,{\rm odd}}|}{i}
        \binom{|{\cal E} \setminus {\cal E}_{U,{\rm odd}}|}{ |V|/\ell - i}, \label{eqn:formula_uniform}
    \end{equation}
    where ${\cal E}_{U, {\rm odd}}$ 
    denotes the subsets of ${\cal E}$ consisting of $F$ 
    with $|F \cap U|$ odd.
\end{corollary}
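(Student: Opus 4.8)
The plan is to specialize Corollary~\ref{theorem:hypergraphmatching1} to the uniform case and then expand the resulting product with the binomial theorem. Starting from \eqref{eqn:formula_hyper}, note that when ${\cal H}$ is $\ell$-uniform every hyperedge satisfies $|F| = \ell$, so each factor in the product becomes $1 + t^{\ell}(-1)^{|F\cap U|}$, which equals $1 + t^{\ell}$ if $|F \cap U|$ is even and $1 - t^{\ell}$ if $|F \cap U|$ is odd. Writing $a_U \coloneqq |{\cal E}_{U,{\rm odd}}|$ for the number of hyperedges $F$ with $|F \cap U|$ odd, the product over $F \in {\cal E}$ collapses to
\[
\prod_{F \in {\cal E}}\left(1 + t^{\ell}(-1)^{|F\cap U|}\right) = (1 - t^{\ell})^{a_U}\,(1 + t^{\ell})^{|{\cal E}| - a_U}.
\]

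Next I would extract the coefficient of $t^{|V|}$ from this expression. Setting $u \coloneqq t^{\ell}$, the right-hand side is a polynomial in $u$, so if $\ell \nmid |V|$ the coefficient of $t^{|V|}$ vanishes (and correspondingly ${\cal H}$ has no perfect matching). Otherwise, with $p \coloneqq |V|/\ell$, the binomial theorem gives
\[
(1-u)^{a_U}(1+u)^{|{\cal E}| - a_U} = \left(\sum_{i}\binom{a_U}{i}(-1)^i u^i\right)\left(\sum_{j}\binom{|{\cal E}| - a_U}{j}u^j\right),
\]
so the coefficient of $u^p$ is $\sum_{i=0}^{p}(-1)^i\binom{a_U}{i}\binom{|{\cal E}| - a_U}{p-i}$. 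Using $|{\cal E}| - a_U = |{\cal E}\setminus {\cal E}_{U,{\rm odd}}|$ and $p = |V|/\ell$, this is exactly the inner sum appearing in \eqref{eqn:formula_uniform}.

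Finally, substituting this coefficient back into \eqref{eqn:formula_hyper} term by term over $U \subseteq V$, keeping the prefactor $1/2^{|V|} = 1/2^n$ and the sign $(-1)^{|U|}$ untouched, yields \eqref{eqn:formula_uniform}. I do not expect any real obstacle here: the argument is a direct substitution plus one application of the binomial theorem. The only point deserving a line of care is the case $\ell \nmid |V|$, where one should remark that both the claimed expression and the matching count are zero, so the identity holds trivially; in the generic case $\ell \mid |V|$ everything is routine bookkeeping.
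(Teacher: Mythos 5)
Your argument is correct and coincides with the paper's proof: both specialize Corollary~\ref{theorem:hypergraphmatching1} to the $\ell$-uniform case, rewrite the product as $(1-t^{\ell})^{|{\cal E}_{U,\mathrm{odd}}|}(1+t^{\ell})^{|{\cal E}\setminus{\cal E}_{U,\mathrm{odd}}|}$, and extract the coefficient of $t^{|V|}$ via the binomial theorem. You spell out the coefficient-extraction step and the degenerate case $\ell\nmid|V|$, which the paper leaves implicit, but the route is the same.
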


\begin{proof}
    The formula in Corollary~\ref{theorem:hypergraphmatching1} becomes 
    \begin{equation}
    \frac{1}{2^{\lvert V \rvert}} \sum_{U \subseteq V} \left(-1\right)^{\lvert U \rvert} \prod_{F \in \mathcal{E}_{U,{\rm odd}}}
    ( 1 -  t^{\ell})  
    \prod_{G \in {\cal E} \setminus \mathcal{E}_{U,{\rm odd}}}
    ( 1 +  t^{\ell}).
    \end{equation}
    By evaluating the coefficient of $t^{|V|}$, we obtain the formula.    
\end{proof}
A similar inclusion-exclusion 
formula for hypergraph matching is given in \cite{bjorklund2008exact}.
In the case of a graph, i.e., $\ell = 2$, the formula (\ref{eqn:formula_uniform}) becomes
    \begin{equation}
    \frac{1}{2^n} \sum_{U \subseteq V} \left(-1\right)^{\lvert U \rvert}
    \sum_{i=0}^{|V|/2}\left(-1\right)^i 
    \binom{|\delta U|}{i}
    \binom{|{\cal E} \setminus \delta U|}{ |V|/2 - i},
    \end{equation}
where $\delta U$ denotes the set of edges for which exactly one of ends belongs to $U$. 

\section{Concluding Remarks}

In this paper, we presented a new algorithm for 
counting integer points in polytopes.
Our original attempt was to count integer points 
by computing the inverse Z-transformation directly by 
numerical integration in floating-point arithmetic.
Although this approach did not work well, 
the theoretical analysis on the error estimate  
brought a new inclusion-exclusion formula for integer points, 
on which our algorithm is built. 

We end this paper with some open problems and future work:
\begin{itemize}
    \item[$\bullet$] We employed the trapezoidal rule for the numerical integration of the inverse Z-trans\-for\-ma\-tion.
    Can other (more sophisticated) methods of numerical integration and their error analysis lead to a better algorithm for integer point counting ?
	
	\item[$\bullet$]  For exact counting of 
	perfect matchings in general $n$-vertex graphs,  
	the current fastest (polynomial space) algorithms are $O^*(2^{n/2})$-time algorithms by
	Bj\"orklund~\cite{bjorklund2012counting} and Cygan and Pilipczuk~\cite{cygan2015faster}. This time complexity matches one by Ryser for bipartite graphs.
	
	For counting $b$-matchings in general graphs, 
	our algorithm in Theorem~\ref{thm:main} brings
	an $O^*(\poly(b) (\|b\|_{\infty}+1)^{n})$-time algorithm, which is 
	improved to $O^*(\poly(b) (\|b\|_{\infty}+1)^{n/2})$-time one 
	for bipartite graphs (Theorem~\ref{thm:main2}). 
	So a natural question is: Can we design a polynomial space $O^*(\poly(b) (\|b\|_{\infty}+1)^{n/2})$-time algorithm for counting $b$-matchings in general graphs ?
	  
	\item[$\bullet$] Our algorithm is simple, and is not difficult to be implemented.
	Implementing our algorithm, evaluating its performance compared with \allowbreak LattE~\cite{DELOERA20041273} (and other lattice counting problems), 
	and incorporating heuristics for speeding up deserve interesting future research. 
\end{itemize}

\section*{Acknowledgements}
We thank the anonymous referee for helpful comments.
H. Hirai is supported by the grant-in-aid of Japan Society of the Promotion of Science with KAKENHI Grant Numbers JP26280004, JP17K00029.
K. Tanaka is supported by the grant-in-aid of Japan Society of the Promotion of Science with KAKENHI Grant Number 17K14241.

\bibliography{counting_integral_0714}

\end{document}